\title{Barrington Plays Cards: The Complexity of Card-based Protocols} 
\titlerunning{Barrington Plays Cards} 
\author{Pavel Dvořák}{Charles University, Prague, Czech Republic}{koblich@iuuk.mff.cuni.cz}{}{}
\author{Michal Koucký}{Charles University, Prague, Czech Republic}{koucky@iuuk.mff.cuni.cz}{}{}
\authorrunning{P. Dvořák and M. Koucký} 
\keywords{Efficient card-based protocol, Branching program, Turing machine} 
\newcommand{\NC}{{\sf NC^1}}
\newcommand{\BP}{{\sf PB}}
\newcommand{\PBP}[1]{#1\text{-}{\sf PBP}}
\newcommand{\CS}{{\sf SP}}
\newcommand{\NL}{{\sf NL}}
\newcommand{\LOG}{{\sf L}}
\newcommand{\cz}{\clubsuit}
\newcommand{\co}{\heartsuit}
\newcommand{\ce}{\times}
\newcommand{\cq}{\,?\,}
\newcommand{\SEL}{\textit{SEL}}
\newcommand{\CONN}{\textit{CONN}}
\begin{document}
\maketitle

\begin{abstract}
In this paper we study the computational complexity of functions that have efficient card-based protocols.
Card-based protocols were proposed by den Boer~\cite{boer90} as a means for secure two-party computation. 
Our contribution is two-fold: We classify a large class of protocols with 
respect to the computational complexity of functions they compute, and we propose other encodings of inputs 
which require fewer cards than the usual 2-card representation.
\end{abstract}

\section{Introduction}
Card-based protocols were proposed by den Boer~\cite{boer90} as a means for secure two-party computation. In this scenario, we have two players --- Alice and Bob --- who hold inputs $x$ and $y$ respectively. Their goal is to securely compute a given function $f$ on those inputs. By secure computation, we mean that the players learn nothing from observing the computation except for what is implied by the output $f(x,y)$.  Den Boer 
introduced a model where the inputs $x$ and $y$ are encoded by a sequence of playing cards and the players operate on the cards to compute the function. They can use additional cards for computation. In particular, den Boer showed how to securely compute AND of two bits using five cards in total. 

Niemi and Renvall~\cite{niemi97} extended the results to show how to compute arbitrary Boolean function $f$.
They represent each input bit by two face-down cards: 1 is represented as $\co\cz$, and 0 as $\cz\co$. They use the same representation throughout the computation and show how to securely compute AND of bits encoded in this \emph{2-card} representation with the output being encoded in the same way. Since NOT can be obtained by swapping the two cards representing a given bit this allows them to compute any function. They can evaluate any Boolean circuit on the inputs by a protocol of length proportional to the size of the circuit and using a number of auxiliary cards that corresponds to the \emph{width} of the circuit.

Nishida et al.~\cite{nishida15} reduced the number of auxiliary cards to 6 for any Boolean function. For most functions, the protocol will be of exponential length as it essentially evaluates the DNF of $f$. Several other works studied the number of cards necessary for computing various elementary functions such as AND and XOR~\cite{danny17,koch15,koch18,kastner17,nishida15,stiglic01,mizuki09,mizuki12,mizuki14a,koch15,abe18,koch18}.

Motivated by the question what can be efficiently computed by such protocols and how many cards one needs to compute various functions, in this work, we investigate \emph{efficient} protocols that are protocols of polynomial length. Our contribution is two-fold: We classify a large class of protocols with respect to the computational complexity of functions they compute, and we propose other encodings of inputs which require fewer cards than the 2-card representation. We summarize our results next:

\smallskip\noindent{\bf 1.} We show that \emph{oblivious} protocols of polynomial length that do not modify their input (they are \emph{read-only}) and use only a constant number of auxiliary cards compute precisely the functions in $\NC$, the class of functions computed by Boolean circuits of logarithmic depth. (Alternatively, $\NC$ is the class of functions computed by Boolean formulas of polynomial size.) By oblivious protocol we mean a protocol whose actions depend only on the current visible state.

\smallskip\noindent{\bf 2.} Oblivious read-only protocols of polynomial length with a logarithmic number of auxiliary cards correspond to the class of functions
computable by polynomial-size branching programs. (This class is also known as $\LOG/poly$, the non-uniform version of deterministic log-space.)

\smallskip\noindent{\bf 3.} We also investigate protocols that use a constant number of auxiliary cards but are allowed to use the cards representing the input for their computation provided that they guarantee that by the end of the computation the input will be restored to its original value. We show that such protocols can compute functions that are believed to be outside of $\NC$. For example, they can compute languages that are complete for $\NL$, the non-deterministic log-space. Hence, read-only protocols are presumably weaker than protocols that may modify their input.

\smallskip\noindent{\bf 4.} We study alternative encodings of inputs that are more efficient that the 2-card encoding. We look at 1-card encoding where 1 is represented by $\co$ and 0 by $\cz$.  In this encoding, Alice and Bob need only one card per bit to commit the bit. We show similar complexity results for this encoding as for the 2-card encoding: read-only protocols with a constant number of auxiliary cards are $\NC$, with a logarithmic number of cards it is the non-uniform log-space, and if we allow using the input cards for computation we get potentially more powerful protocols.

\smallskip
A disadvantage of the 1-card protocol is that it still needs a supply of $n$ cards $\co$ and $n$ cards $\cz$ to represent any $n$-bit input. 
Although, if one restricted his attention to inputs that contain the same number of 1's and 0's, it would suffice to have $n/2$ cards $\co$ and $n/2$ cards $\cz$. Such inputs form a substantial fraction of all $n$-bit inputs, they are $\Theta(2^{n}/\sqrt{n})$ many.

\smallskip\noindent{\bf 5.} We propose a new $1/2$-card encoding which requires only $n/2$ cards $\co$ and $n/2$ cards $\cz$ to represent any $n$-bit input.
The $1/2$-card encoding is obtained from the 2-card encoding by removing from each pair of cards one card, in total one half of the $\co$-cards and one half of the $\cz$-cards. There is an empty space left instead of each removed card. There is a way for each player to encode his input so that the other player learns no information about the opponent's input. We show that using this encoding we can simulate any read-only protocol that uses 2-card encoding.
Hence, any $\NC$ function on $n$ bits can be securely computed using only $n+O(1)$ cards, counting also the input cards. We do not know how to securely perform protocols for $1/2$-card encoding that would modify their input.

\subsection{Previous Work}
As mentioned above, a study of card-based protocols was started by den Boer~\cite{boer90} who introduced a secure 5-card protocol for computing AND.
However, this protocol does not produce output in a face-down 2-card format, thus it can not be used for designing protocols for arbitrary function.
Since then a lot of work was done in improving AND protocols and other primitive functions.
Cr{\'e}peau and Kilian~\cite{crepeau94} provided a 1-party card-based protocol where the player can pick a random permutation $\pi$ with no fixed point and the player has no information about $\pi$.
Their technique can be used for designing a 2-party computation of a general function.
Niemi and Renvall~\cite{niemi97} introduced an AND protocol, which takes two bits $b_1, b_2$ represented in the 2-card format as input and outputs two cards which represent $b_1 \wedge b_2$ in face-down 2-card format.
They also introduced a protocol for copying a bit in the 2-card format, which is used during simulation of circuits.
Their protocols with a protocol for NOT (which is trivial) can be used for computing any Boolean function $f$ and the number of used cards is at most linear in the size of a circuit (using AND and NOT gates) computing $f$.
They also introduced a protocol to copy a single card with almost perfect security -- the card suit is revealed only with a small probability.
Such protocol cannot exist with perfect security as was proved by Mizuki and Shizuya~\cite{mizuki14}.
The copying and AND protocols were further improved and simplified in~\cite{stiglic01,mizuki09,mizuki12,mizuki14a,koch15,abe18,koch18}.

Nishida et al.~\cite{nishida15} proved that any Boolean function $f: \{0,1\}^n \times \{0,1\}^n \to \{0,1\}$ can be computed with $4n$ cards encoding the input and 6 additional helping cards.
Mizuki~\cite{mizuki16} proved that $2n + 2$ is needed to compute AND of $n$ bits.
Francis et al.~\cite{danny17} provided protocols and tight lower bounds on the number of cards needed for computing any two-bit input two-bit output function.
Other lower bounds for AND of 2 bits and of $n$ bits in various regimes were provided by Koch et al.~\cite{koch15, koch18} and by Kastner et al.~\cite{kastner17}.

The security of card-based protocols is provided by shuffling the cards --- one player shuffles the cards (applies some random permutation to them) in a way so that the other player has no information about the new order of shuffled cards.
Koch et al.~\cite{koch15} provided a 4-card AND protocol.
However, they used a non-uniform distribution for picking a random permutation, which is difficult to perform by humans.
Nishimura et al.~\cite{nishimura16} suggested an ``easy-for-human'' procedure how to apply a shuffling permutation picked from a non-uniform distribution using envelopes.

One can distinguish two types of attack.
\begin{enumerate}
 \item Passive: honest-but-curious player -- she follows the protocol but she wants to retrieve as much information as possible about the other player input.
 \item Active: malicious player -- she can deviate from the protocol.
\end{enumerate}
Koch and Walzer~\cite{koch21} proved that if a passive-secure protocol $\Pi$ uses only uniform closed shuffles (each shuffling permutation is picked uniformly from some permutation group) then the protocol $\Pi$ can be transformed into an active-secure protocol.

\section{Preliminaries}

\subsection{Card-based Protocols}
\label{sec:PrelimProtocol}
In this section we define \emph{card-based protocols} which securely compute some Boolean function on a joint input of Alice and Bob.
Alice gets an input $x \in \{0,1\}^n$ and Bob gets an input $y \in \{0,1\}^n$, and their goal is to compute $f(x,y)$ for some function $f: \{0,1\}^n \times \{0,1\}^n \to \{0,1\}$, while not revealing anything about their input to the other player.
The protocol proceeds first by Alice and Bob committing their input into a sequence of cards, and then operating on the cards together with some auxiliary cards. At the end of the protocol, the players learn the output $f(x,y)$. 

In this section we consider the usual \emph{2-card encoding} of the input, where each input is represented as a sequence of cards, two cards per bit: value $1$ is represented by $\co\cz$ 
and value $0$ is represented by $\cz\co$ where the cards are put face-down on the table. Hence each player needs $2n$ cards to commit his input.
All the cards have the same back, say blue. 
In the beginning, face-down cards representing the player inputs are in front of the players.
Between them, there is a \emph{deck} of $s$ prescribed auxiliary cards of $\co$ and $\cz$.
There is available some empty space on the table to operate with the cards.
We assume that the cards are placed on the table in some specific positions (locations), numbered $1,\dots,m$, where:
\begin{itemize}
 \item $1,\dots,2n$ are positions of Alice's input cards,
 \item $2n+1,\dots,4n$ are positions of Bob's cards,
 \item $4n+1,\dots,4n+s$ are the initial positions of the helping cards in the deck,
 \item $4n + s +1,\dots, m$ are initially empty positions.
\end{itemize}
We call the positions $1,\dots,4n$ as the input positions and the remaining positions as the \emph{work space}.
We say a position is \emph{occupied} if there is a card on it, otherwise, it is \emph{empty}. We denote an empty position by $\ce$.
Let $q=m-4n$ denote the amount of the work space. We assume $q=O(s)$.
Thus, there are $4n + s$ cards on the table and $4n + q = m$ positions.

The players can move their input cards and cards from the deck to the work space and back.
Formally, the basic \emph{actions} which can be executed by the players are:
\begin{description}
 \item[Move($p, i, j$)]: The player $p$ moves a card from the position $i$ to position $j$. 
  \item[Shuffle($p, T, \Gamma$)]: The player $p$ applies a random permutation from $\Gamma$ to the cards on the table on positions $T \subseteq \{4n+1,\dots,m\}$. 
  \item[Turn($p,i$)]: The player $p$ turns the $i$-th card on the table face-up if it is face-down, and vice versa.
\end{description}

The protocol specifies which action to take next based on the sequences of visible states seen on the table so far. The current visible state of the table is what an external viewer could observe, that is which positions are currently occupied and what is the top of each card laying on the table. If there are $c$ distinct cards then there are at most $(c+2)^m$ distinct visible states.
Hence, based on the sequence of visible states from the beginning of the game the protocol specifies which action to take next or whether to end. In the end, the protocol specifies which cards represent the output of the run of the protocol. (They might be face-down.) We say the protocol $\Pi$ \emph{computes} a function $f: \{0,1\}^n \times \{0,1\}^n \to \{0,1\}$ if for all inputs $x,y \in \{0,1\}^n$, on the inputs $x$ and $y$ the protocol outputs $f(x,y)$.
The length of the protocol is the maximum number of actions executed by the protocol over all inputs $(x,y) \in \{0,1\}^n \times \{0,1\}^n$
and all possible outcomes of shuffling. We say that a protocol is \emph{oblivious} if the action executed next depends only on the current visible state and the number of actions taken so far.

The shuffling operation provides randomness for the execution of the protocol. Hence, the sequence of visible states the protocol passes through is a random variable. We will say that a protocol is \emph{secure} if for any pair of inputs $(x,y)$ and $(x',y')$ to Alice and Bob, 
where $f(x,y)=f(x',y')$, the distribution of the sequence of visible states of the protocol on inputs $(x,y)$ and $(x',y')$ is the same.
Notice, that this implies that neither of the players learns anything about the input of the other player except for what is implied by $f(x,y)$.

Often we will be interested in protocols that provide their output encoded in face-down cards. In such a scenario we will require for the security of the protocol that the distributions of visible states during the protocol will be identical for all input pairs $(x,y)$.

We say the protocol is \emph{robust} if a cheating player, that is a player who deviates from the protocol, is either caught by reaching an invalid visible state (where cards have unexpected values or positions) or the distribution of visible states does not leak any information about the other player input except for what would be leaked by honest players. In particular, if say Bob is cheating and Alice is honest, for a robust protocol we require that for any input $x$ of Bob and any two inputs $y,y'$ to Alice, where $f(x,y)=f(x,y')$, the distribution of the sequence of visible states during the game on inputs $(x,y)$ and $(x,y')$ is the same. We will be designing only robust oblivious protocols.

We say the protocol is \emph{read-only} if the value of cards placed on the input positions $1,\dots,4n$ is always the same whenever a position is occupied. 

Let $s$-$\CS$ be the class of function families $\{ f_n: \{0,1\}^n \times \{0,1\}^n \to \{0,1\} \}_{n\ge 0}$ for which we have a sequence of secure read-only oblivious protocols, one for each $n$, which are of length polynomial in $n$, with deck size $s$ and work space size $2s$. (At the beginning the first $s$ work space positions are occupied by the deck of cards, and the remaining $s$ positions are empty). We might allow $s$ to be a function of $n$. We define $\CS= \bigcup_{s\ge 1} s\textit{-}\CS$.
That is a function belongs to $\CS$ if it has polynomial length protocols which use a constant number of auxiliary cards and constant size work space.

\subsection{Branching Programs}
A \emph{branching program} $B$ for a Boolean function $f: \{0,1\}^n \to \{0,1\}$ is defined as follows.
It consists of a directed acyclic graph $G$ such that each vertex has out-degree either 2 or 0.
The set of edges $E$ of the graph $G$ is split into two sets, zero-edges $E_0$ and one-edges $E_1$, in such a way that every vertex $v$ of out-degree 2 is incident to exactly one outgoing zero-edge and exactly one outgoing one-edge.
Each vertex of out-degree 2 is labeled by an index $\ell \in [n]$.

A branching program $B$ is \emph{layered} if the vertices are partitioned into layers $L_1,\dots,L_d$.
The edges go only from a layer $L_i$ to a layer $L_{i+1}$ (for all $i < d$).
Vertices of out-degree 0 are exactly vertices in the layer $L_d$.
The number of layers $d$ is the \emph{length} of $B$ and the \emph{width} $w$ of $B$ is the maximum size of its layers, i.e., $w = \max_i |L_i|$.

A layered branching program is \emph{oblivious} if vertices in the same layer have the same label.
A branching program is a \emph{permutation branching program} if each layer has exactly $w$ vertices and for every two consecutive layers $L_j$ and $L_{j + 1}$ zero-edges and one-edges form matching $M^0_j$ and $M^1_j$, respectively.
We can view the matchings $M^0_j$ and $M^1_j$ as two permutations $\pi^0_j, \pi^1_j: [w] \to [w]$.
Note that we can rearrange all layers such that all permutations $\pi^0_j$ are identities.

One vertex of in-degree 0 is an \emph{initial} vertex $\bar{v}$.
Some vertices of out-degree 0 are denoted as \emph{accepting} vertices.
The computation of a branching program $B$ on an input string $x \in \{0,1\}^n$ proceeds as follows.
It starts in the initial vertex $\bar{v}$ which is the first \emph{active} vertex.
Suppose $v$ is an active vertex and $\ell \in [n]$ is the label of $v$.
If the out-degree of $v$ is 2, then the next active vertex is determined by the zero- or one-edge according to the value of $x_\ell$.
More formally, let $e = \{v,v'\}$ be the edge in $E_{x_\ell}$.
Then, the vertex $v'$ is the new active vertex.
We repeat this procedure until a vertex $u$ of out-degree 0 is reached.
An input $x \in \{0,1\}^n$ is accepted if and only if $u$ is an accepting vertex.
The branching program $B$ computes a function $f: \{0,1\}^n \to \{0,1\}$ if it accepts exactly those $x \in \{0,1\}^n$ such that $f(x) = 1$.

The class of functions {\BP} contains all the functions computable by layered branching programs of constant width and polynomial length.
A permutation branching program is \emph{restricted} if it has exactly one accepting vertex $v_\text{acc}$ and exactly one rejecting vertex $v_\text{rej}$ in the last layer $L_d$.
The computation of a restricted permutation branching program ends always in the vertices $v_\text{acc}$ or $v_\text{rej}$ and it accepts an input if it ends in the accepting vertex $v_\text{acc}$.
A class $\PBP{w}$ contains Boolean functions which are computable by restricted permutation branching programs of width $w$ and polynomial length.
We use the famous Barrington's theorem~\cite{barrington89}, which says that constant-width (permutation) branching programs are as powerful as $\NC$-circuits.

\begin{theorem}[Barrington~\cite{barrington89}]
\label{thm:Barrington}
 $\BP \subseteq \NC \subseteq \PBP{5}$.
\end{theorem}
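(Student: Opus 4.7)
The plan is to handle the two inclusions separately; the second is the substantive one. For $\BP \subseteq \NC$, I would view each layer $L_j$ of a width-$w$, polynomial-length branching program as a function $g_j : [w] \to [w]$ depending on at most one input bit, where $w = O(1)$. The computation on input $x$ is the composition $g_d \circ g_{d-1} \circ \cdots \circ g_1$ evaluated at the start vertex. Each $g_j$ is encoded in $O(1)$ bits, and composing two such functions is a fixed-size Boolean operation. Arranging the compositions in a balanced binary tree yields a circuit of depth $O(\log d) = O(\log n)$, which is $\NC$.

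For $\NC \subseteq \PBP{5}$, I would invoke the non-solvability of $S_5$, which is the heart of Barrington's argument. I would prove by induction on the depth $d$ of an $\NC$ formula $\varphi$ the following statement: for every 5-cycle $\sigma \in S_5$, there is a width-5 permutation branching program of length at most $4^d$ which, on input $x$, composes to the identity if $\varphi(x) = 0$ and to $\sigma$ if $\varphi(x) = 1$. A literal is handled by a single layer whose one-edge matching is $\sigma$ and whose zero-edge matching is the identity (or vice versa for a negated literal). For an AND gate $\varphi = \varphi_1 \wedge \varphi_2$, I would pick 5-cycles $\alpha, \beta$ whose commutator $\alpha \beta \alpha^{-1} \beta^{-1}$ is a 5-cycle, and concatenate four programs realizing $\varphi_1 \mapsto \alpha$, $\varphi_2 \mapsto \beta$, $\varphi_1 \mapsto \alpha^{-1}$, $\varphi_2 \mapsto \beta^{-1}$. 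If either $\varphi_i$ rejects, its factor collapses to the identity and so does the whole product; if both accept, the product equals $[\alpha,\beta]$. Because all 5-cycles are conjugate in $S_5$, a pre- and post-multiplication by fixed permutations (absorbed into the boundary layers) sends $[\alpha,\beta]$ to the prescribed $\sigma$ while leaving the identity fixed.

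To obtain a \emph{restricted} permutation branching program with exactly one accepting and one rejecting sink, I would use that $\sigma$ is a 5-cycle, hence has no fixed point: starting from a designated state $1$, the end states under acceptance and rejection are distinct, so I can collapse all other states of the last layer via an extra constant-depth layer that routes state $\sigma(1)$ to $v_\text{acc}$ and state $1$ to $v_\text{rej}$ while preserving the permutation property. The total length is $4^{O(\log n)} = \mathrm{poly}(n)$, since $\NC$ formulas have depth $O(\log n)$.

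The main obstacle is the inductive step for AND, which rests on the group-theoretic fact that every 5-cycle in $S_5$ arises as a commutator of two 5-cycles. This fails in $S_4$ and smaller symmetric groups, because $S_4$ is solvable and its commutator subgroup eventually dies; this is precisely why width $5$ (and not $4$) is required. The delicate point is ensuring that the conjugating permutations used to turn $[\alpha,\beta]$ into the desired target $\sigma$ can be absorbed into the first and last layers of the two recursive sub-programs without inflating the length bound past $4^d$, which is what forces the clean recursion $T(d) \le 4 T(d-1)$.
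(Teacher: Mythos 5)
The paper does not prove this statement itself---it imports Barrington's theorem as a black box via the citation---and your sketch is precisely the classical argument from Barrington's paper: the balanced-tree composition of layer transition functions for $\BP \subseteq \NC$ and the commutator-of-5-cycles induction with length recursion $T(d)\le 4T(d-1)$ for $\NC \subseteq \PBP{5}$. Your sketch is correct in all essentials; the only omissions are minor and standard (you do not treat OR gates, but negation is free by appending a fixed permutation to the final layer and conjugating, so OR follows from AND by De Morgan; and a layer of a non-oblivious constant-width program may read up to $w$ input bits rather than one, which changes nothing since $w=O(1)$).
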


\section{Simulating Branching Programs}
In this section, we prove one of our main theorems that read-only oblivious protocols of polynomial length that use constant work space compute the same functions as polynomial-size constant-width branching programs.

\begin{theorem}\label{thm:OneCard}
$\CS=\NC$.
\end{theorem}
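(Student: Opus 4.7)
The proof will establish both inclusions $\NC \subseteq \CS$ and $\CS \subseteq \NC$.

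For $\NC \subseteq \CS$, I would apply Barrington's theorem (Theorem~\ref{thm:Barrington}) to reduce the task to simulating a restricted width-$5$ permutation branching program $B$ of polynomial length, where we may assume $\pi^0_j$ is the identity at every layer by rearranging layers. The simulating protocol keeps the current active vertex of $B$ as five face-down state cards in work space, one $\co$ among four $\cz$'s, and for each layer $j$ labelled by variable $\ell$ it conditionally applies $\pi^1_j$ driven by the value of $x_\ell$. I would decompose $\pi^1_j$ into a constant number of transpositions and implement each conditional transposition with a gadget that first copies the $2$-card encoding of $x_\ell$ into scratch via the Niemi--Renvall copy protocol (so the input cards remain untouched, preserving the read-only property), then performs a uniform shuffle between the identity and the joint permutation that simultaneously swaps the chosen state pair and the scratch pair. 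Turning the scratch face up reveals which branch of the shuffle occurred with probability $1/2$ each regardless of $x_\ell$, so the gadget is secure; a deterministic follow-up then either applies a corrective swap or does nothing so that the net effect is ``swap iff $x_\ell = 1$''. The face-up scratch cards have publicly known values and can be moved back to their deck positions for reuse. The main technical subtlety is verifying that the gadget fits in the available constant work space and that its pieces compose cleanly across layers.

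For the reverse inclusion $\CS \subseteq \NC$, the plan is to simulate any polynomial-length read-only oblivious protocol with constant work space by a polynomial-length constant-width branching program and invoke Barrington's theorem again. Obliviousness implies the action at time $t$ depends only on $(V_t, t)$, where $V_t$ is the visible state over a constant-size alphabet, so fixing the randomness of all shuffles (permissible because correctness must hold for every outcome) yields a deterministic simulation that unrolls into a branching program whose layers are indexed by time and whose vertices at layer $t$ correspond to visible states. Move and Shuffle actions induce deterministic transitions; Turn actions on input cards produce branches on the corresponding input bit.

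The main obstacle will be controlling the width of this branching program. Naively, tracking which of the $n$ input cards currently sits at each of the $O(1)$ work space positions requires polynomial width, giving only a $\LOG/poly$-like class rather than $\NC$. The key step will be to argue that the oblivious, read-only, and secure structure of the protocol together force the mapping from reachable visible states to underlying card configurations to be rigid enough that only a constant number of distinct assignments need be distinguished at each time step. Establishing this rigidity carefully, and showing that the resulting simulation is indeed a constant-width polynomial-length branching program, is the technical heart of the backward direction.
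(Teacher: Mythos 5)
Your forward direction is essentially the paper's proof that $\PBP{5}\subseteq\CS$: Barrington's theorem, five program cards with $\co$ marking the active vertex, decomposition of each layer's permutation into transpositions, a copy protocol so that the input positions stay intact, and a conditional-swap gadget. Your gadget (uniformly shuffle between the identity and the joint swap of the state pair and the scratch pair, reveal the scratch pair, then apply a corrective swap exactly when $\co\cz$ is seen) is an equivalent reformulation of the cyclic-shift protocol of Theorem~\ref{thm:SwapCards}, and the security analysis is the same. That half is fine.

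The backward direction, however, has a genuine gap: you correctly sense that width control is the crux, but you leave the resolution unproved and point it in the wrong direction. No ``rigidity'' argument about the map from visible states to card configurations is needed, and none would close the hole as you have set things up. The obstacle is an artifact of two choices you made: branching on input bits at \emph{Turn} actions, and implicitly tracking \emph{which} input card occupies a work-space position. In the paper's proof of Theorem~\ref{thm:OneCard2} the vertices of layer $t$ are the possible \emph{contents} of the $2s$ work-space positions --- the value and orientation of the card (or emptiness) at each position, at most $(2c+1)^{2s}=O(1)$ states --- and the query to an input variable is placed at the \emph{Move} action that brings an input card into the work space: the two outgoing edges lead to the two work-space contents consistent with the two values of that bit, and from then on the card's origin is irrelevant. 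Shuffles are handled by a dummy query whose two edges both go to the state obtained by one fixed allowed permutation, and read-onlyness guarantees that a later re-read of the same input position queries a variable whose value has not changed. With your placement of the query at Turn actions, a face-down input card sitting in the work space (possibly shuffled with others before being turned) forces the state to remember where it came from, which is exactly the polynomial blow-up you worry about; so as written the difficulty is real for your construction and the sketched rigidity argument does not resolve it. Note also that the simulation only needs to land in $\BP$ (constant-width, polynomial-length branching programs); the inclusion $\BP\subseteq\NC$ of Theorem~\ref{thm:Barrington} then finishes the proof, so there is no need to produce a permutation program in this direction.
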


To simulate a branching program by $\CS$-protocol we need an oblivious implementation of copying a bit in the committed 2-card format.
We use a procedure by Stiglic~\cite{stiglic01}.
It is straightforward to implement the procedure to be oblivious.
We include the proof for the sake of completeness.

\begin{restatable}{theorem}{copybit}
\label{thm:CopyBit}
 There is a secure oblivious protocol that takes a bit $b$ in 2-card representation placed in the work space and produces two 2-card copies of the bit in the work space. The protocol needs an auxiliary deck with three cards $\co$ and three $\cz$ with the same back as the input bit.
\end{restatable}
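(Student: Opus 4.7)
The plan is to translate the copy procedure of Stiglic \cite{stiglic01} into the Move/Shuffle/Turn action framework of Section~\ref{sec:PrelimProtocol} and then verify its three required properties — correctness, obliviousness and security — in turn. The protocol splits into four phases. In the \emph{setup} phase, deterministic Move actions transfer the six auxiliary cards from the deck and arrange them alongside the input pair $(b_1,b_2)$ into a prescribed ordered configuration (for example, the input pair together with two $\cz\co$ auxiliary pairs committed to $0$ and a further pair of balance cards). In the \emph{shuffle} phase, a single Shuffle action applies a uniformly random permutation from a prescribed group $\Gamma$ to the corresponding work-space positions, treating each 2-card block as a rigid unit, so that the block encoding $b$ ends up at a uniformly random block slot while retaining its encoding. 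In the \emph{reveal} phase, a short fixed sequence of Turn actions exposes a prescribed subset of the cards. In the \emph{wrap-up} phase, a deterministic sequence of Move (and possibly Turn) actions uses the exposed face-up values to locate the face-down $b$-block and place two copies of its 2-card encoding at the two designated output positions, returning remaining cards to the deck area.

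I would verify the three properties in turn. \emph{Obliviousness} is immediate from the construction: every setup move and the Shuffle action are fixed functions of the step count alone, and every Turn action and subsequent wrap-up move depends only on the current visible state rather than on the hidden value of $b$. \emph{Correctness} is a case analysis on $b\in\{0,1\}$: in both cases the combination of the prescribed reveal pattern and the deterministic wrap-up correctly pinpoints and duplicates the $b$-block into the output positions, producing two face-down 2-card encodings of $b$.

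The main obstacle is establishing \emph{security}, i.e., showing that the joint distribution over the visible-state sequences produced during the shuffle and reveal phases is identical for $b=0$ and $b=1$. This reduces to verifying that the uniformity of the shuffle group $\Gamma$ symmetrizes the slot position of the $b$-block, and that the chosen reveal pattern inspects cards whose face-up values have, together with their positions, the same distribution in the two input cases. Once the shuffle group and the reveal positions have been pinned down from Stiglic's construction, the security argument is a direct distribution-matching computation over the $|\Gamma|$ equally likely shuffle outcomes, which together with the correctness and obliviousness checks completes the proof.
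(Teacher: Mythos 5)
Your proposal does not contain the idea that makes copying possible, and the protocol skeleton you describe would not work as stated. A Shuffle that permutes 2-card blocks as rigid units leaves every block encoding its original bit; afterwards you face a dilemma: to ``locate the face-down $b$-block'' you must turn cards, and turning the $b$-block's cards reveals $b$, while turning only the auxiliary blocks at best tells you where $b$ sits and still leaves you with a single physical copy of it. No sequence of Move and Turn actions can ``place two copies of its 2-card encoding at the two designated output positions'' starting from one face-down pair --- producing the second copy without ever looking at the bit is the entire content of the theorem, and your wrap-up phase simply asserts that it happens.

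The mechanism the paper (following Stiglic) actually uses is an XOR-masking trick, which is absent from your description. The six auxiliary cards are laid out as $\co\cz\co\cz\co\cz$, i.e.\ three 2-card encodings of the \emph{same known} bit; a random cyclic shift of all six card positions (not a block-rigid shuffle) turns them into three encodings of a common \emph{uniformly random} bit $b'$, because an odd shift flips all three encodings simultaneously while keeping them equal. One copy of $b'$ is then grouped with the input pair, the resulting four cards are cyclically shifted and turned face-up, which reveals exactly $b\oplus b'$ --- a uniform bit independent of $b$, whence security --- and depending on that value the remaining two encodings of $b'$ are either kept or negated by swapping within each pair, yielding two face-down encodings of $b$. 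Without this mask-and-reveal step there is no way to duplicate a committed bit, so this is a genuine gap rather than an alternative route. A smaller point: obliviousness is not ``immediate'' even in the correct protocol, since the two branches after the reveal take different numbers of actions and must be padded with dummy actions, as the paper notes.
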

\begin{proof}
\begin{enumerate}
 \item Alice arranges the cards from the auxiliary deck face-up to create the following configuration.
 \[
 \underbrace{\cq\cq}_b \co\cz\co\cz\co\cz
 \]
 \item She turns the last six cards. Both, Alice and Bob, apply a random cyclic shift (denoted by $\langle,\rangle$) to them.
 \[
  \underbrace{\cq\cq}_b \langle\underbrace{\cq\cq}_{b'}\underbrace{\cq\cq}_{b'}\underbrace{\cq\cq}_{b'}\rangle
 \]
  \item They apply a random cyclic shift to the first four cards.
  \[
  \langle \cq\cq\cq\cq \rangle \underbrace{\cq\cq}_{b'}\underbrace{\cq\cq}_{b'}
 \]
 \item She turns the first four cards face-up. 
 \begin{enumerate}
  \item If the sequence is alternating (i.e., $\co\cz\co\cz$ or its shift) then $b = b'$. Thus, the last 4 cards represent two copies of $b$.
  \[
   \co\cz\co\cz \underbrace{\cq\cq}_{b}\underbrace{\cq\cq}_{b}
  \]

  \item Otherwise (i.e., $\co\cz\cz\co$ or its shift) then $b = 1 - b'$. Thus, the last 4 cards represent two copies of negation of $b$. In that case, she switches the fifth with the sixth card and the seventh with the eighth card to represent two copies of $b$ as well.
  \[
   \co\cz\cz\co \underbrace{\cq\cq}_{1 - b}\underbrace{\cq\cq}_{1 - b}
  \]
 \end{enumerate}
Alice and Bob might want to turn over and shuffle the first four left-over auxiliary cards after step 4.
The last four cards represent two copies of $b$ face-down in the 2-card format. To make the protocol oblivious we must implement both 4.a) and 4.b) by the same number of actions. To do so we include additional actions in 4.a) which have no effect such as shuffling a single card. 
\end{enumerate}

It is clear the described protocol is secure.
The only step where they can gain some information about $b$ is Step 4, when Alice turns some cards.
However, the cyclic shifts in Steps 2 and 3 were done by both players.
Thus, Alice reveals the alternating sequence ($\co\cz\co\cz$) in Step 4 with the probability exactly $\frac{1}{2}$ (independently on the value of $b$) even if one of the players would be cheating.
Thus, the protocol is secure.
\end{proof}

To prove $\NC \subseteq \CS$ we use as the first step Barrington's theorem~\cite{barrington89}.
By Barrington's theorem, each function $f : \{0,1\}^n \times \{0,1\}^n \to \{0,1\}$ from $\NC$ can be computed by a polynomial length width-5 restricted permutation branching program. We will build a protocol that simulates the actions of the branching program layer by layer. 
We will keep track of the image of the initial vertex of the branching program. For that, we will use five cards $\co\cz\cz\cz\cz$, where the position
of $\co$ corresponds to the image of the initial vertex (the active vertex), and we will apply the permutations prescribed by the branching program on those five cards.
If the input variable assigned to a particular level of the branching program is set to 1 we are expected to perform the permutation otherwise we are supposed to do nothing, i.e., apply the identity permutation. Any permutation can be decomposed into a sequence of simple transpositions (swaps) so we will use swaps conditioned by the input variable to either permute the five cards or leave them the way they are.
We will implement the following primitive:
Alice and Bob want to conditionally swap two cards $\alpha,\beta$ according to the value of bit $b$ represented in the face-down 2-card form in the work space without revealing the value of $b$. They also want to make sure that if $b = 1$ the swap occurs and if $b = 0$ the swap does not occur.

\begin{theorem}
\label{thm:SwapCards}
 Let  $\tilde{\alpha},\tilde{\beta}$ be two sequences of face-down cards of the same length in the work space, and let $\gamma\,\delta$ be a face-down 2-card representation of $b$ in the work space. 
 There is a secure oblivious protocol such that during the protocol players swap the sequences $\alpha$ and $\beta$ if and only if $b = 1$. The protocol uses two auxiliary cards $\cz$.
\end{theorem}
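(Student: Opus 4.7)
The plan is to realise the conditional swap by a single joint random row-swap on a two-row layout, followed by a public correction. First I would arrange the cards into a $2\times(k+1)$ block (where $k$ is the common length of $\tilde\alpha$ and $\tilde\beta$): the top row $\gamma,\alpha_1,\ldots,\alpha_k$ and the bottom row $\delta,\beta_1,\ldots,\beta_k$, so that column~$0$ is the commitment of $b$ and each subsequent column is a pair $(\alpha_i,\beta_i)$. I would then extend the layout to $2\times(k+2)$ by placing the two auxiliary $\cz$ cards as a new column $(\cz,\cz)$; this column is invariant under the row-swap permutation and serves as a syntactic anchor so the shuffle group is defined on a fixed set of positions (and can be used to detect any deviation from the prescribed permutation group).

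The core step is a shuffle from the group $\Gamma=\{\mathrm{id},\sigma\}$ on all $2(k{+}2)$ positions, where $\sigma$ is the row-swap permutation. To make this jointly random, Alice applies a uniform element of $\Gamma$, then Bob applies a uniform element of $\Gamma$; the composition is uniform over $\Gamma$ as long as at least one player is honest. Call the resulting secret bit $r\in\{0,1\}$. I would then turn up column~$0$ and define $c=1$ if the top card is $\co$ and $c=0$ otherwise. A direct case analysis on the four combinations of $(b,r)$ shows $c=b\oplus r$, so $c$ is uniform and independent of $b$. Moreover after the shuffle the top row of the $(\alpha,\beta)$-columns contains the original $\alpha$-sequence iff $r=0$, so the target configuration (rows swapped iff $b=1$) is exactly reached when $r=b$, i.e.\ when $c=0$. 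Hence the deterministic correction is: if $c=1$ swap the two rows of the $\alpha,\beta$-columns, if $c=0$ do nothing. To guarantee obliviousness, the $c=0$ branch is padded with dummy actions (for instance a trivial shuffle on the $\cz\cz$-column) so that the total number and types of actions match the $c=1$ branch and depend only on the visible state $c$.

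Security of an honest execution follows from $c$ being uniform and independent of $b$: the only face-up event during the protocol is the revelation of column~$0$, whose distribution does not depend on $b$. The main obstacle I foresee is formally verifying robustness against a malicious player. One needs to argue (i) that a cheater applying a shuffle from a larger permutation group than $\Gamma$ is detected because the anchor column $(\cz,\cz)$ would no longer read $(\cz,\cz)$ when turned up at the end, (ii) that a cheater sampling non-uniformly from $\Gamma$ still sees a uniform $c$ because the honest player's contribution to the composition is uniform, and (iii) that the conditional correction, being deterministic in the publicly visible $c$, cannot be exploited to leak information about the honest player's committed data. Once these three points are checked, correctness and security of the conditional swap follow from the case analysis above.
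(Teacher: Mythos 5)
Your protocol is correct, and it is recognizably the same underlying trick as the paper's: mask the committed bit with a fresh shuffle bit $r$, reveal the now information-free commitment $c=b\oplus r$, and finish with a public action determined by $c$. Your case analysis and security argument (the revealed pair is a uniformly ordered $\co,\cz$ independent of $b$, and $r$ stays uniform provided at least one player shuffles honestly) are both sound. The implementations differ in an interesting way, though. The paper interleaves everything into one line $\cz\,\gamma\,\tilde{\alpha}\,\cz\,\delta\,\tilde{\beta}$ and shuffles with the full cyclic group of order $2k+4$; there the two auxiliary $\cz$ cards are functionally essential face-up markers that let the player \emph{relocate} $\gamma$ and $\delta$ after the shift, and the ``correction'' is merely a rotation back to a canonical starting point rather than a separate conditional swap of $\tilde{\alpha}$ and $\tilde{\beta}$. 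You shuffle instead with the two-element row-swap group on a fixed two-row layout (a random-bisection-cut--style operation), so no positions ever need to be rediscovered and your $\cz\cz$ column is inert --- you keep it only to match the card count in the statement and as a putative cheating detector. Your version is cleaner to analyze (the shuffle group has order $2$, so $c=b\oplus r$ is immediate and the explicit correction is transparent); the paper's version buys the marker mechanism that it reuses later in the $1$-card and $1/2$-card read/write procedures. One caveat on your robustness point (i): the anchor column only certifies that the shuffle fixed that column, not that it was drawn from $\{\mathrm{id},\sigma\}$; a cheater who fixes the anchor but moves some $\alpha_i$ into column $0$ would cause a data card to be revealed. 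This is no worse than what the paper's own one-line robustness claim withstands, but if you want (i) to carry real weight you should derive it from the honest player's independent uniform shuffle rather than from the anchor.
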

\begin{proof}
The swapping protocol works as follows.
\begin{enumerate}
 \item Alice rearranges the input cards together with two auxiliary face-up cards $\cz$ as follows:
 \[
   \cz \gamma~\tilde{\alpha}~\cz \delta~\tilde{\beta}
 \]
 Thus, if $b = 0$ we have $\cz \cz \tilde{\alpha} \cz\co \tilde{\beta}$ and if $b = 1$ we have $\cz\co \tilde{\alpha} \cz\cz \tilde{\beta}$. The players do not know which situation are they in.
 \item Both, Alice and Bob, apply a random cyclic shift to the cards, e.g.:
 \[
  \tilde{\alpha}~\cz\delta~\tilde{\beta}~\cz\gamma
 \]
  \item Alice turns the cards $\gamma$ and $\delta$ representing $b$ face-up. She knows what cards to turn, as the cards $\gamma$ and $\delta$ are preceded by $\cz$ face-up.
  At the end she reorders the sequence (keeping the cyclic order) so that $\cz\cz$ are the first cards, e.g.:
  \[
   \tilde{\alpha} \cz\co \tilde{\beta} \cz\cz \rightarrow \cz\cz \tilde{\alpha} \cz\co \tilde{\beta}
  \]
  If $b = 0$ then $\gamma \delta = \cz\co$ and the sequences $\tilde{\alpha}\tilde{\beta}$ are not swapped. 
  On the other hand if $b = 1$ then  $\gamma \delta = \co\cz$ and the sequences are swapped.
\end{enumerate}
Note that the cards in $\tilde{\alpha}$ and $\tilde{\beta}$ are face-down during the whole protocol.
It is also clear that this is a secure and robust protocol, and it can be implemented obliviously.
In Step 3 the cards $\delta$ and $\gamma$ representing the bit $b$ are revealed.
However, because of random cyclic shifts in Step 2 (again done by both players), these cards are in the order $\co\cz$ with probability $\frac{1}{2}$, independently of the value of $b$.
Thus the swapping protocol is secure.
\end{proof}

Now we are ready to prove the first inclusion of Theorem~\ref{thm:OneCard}.

\begin{theorem}
\label{thm:OneCard1}
 $\PBP{5} \subseteq \CS.$
\end{theorem}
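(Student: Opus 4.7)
The plan is to simulate the width-$5$ restricted permutation branching program $B$ of polynomial length for $f$, following the strategy sketched in the discussion preceding the statement. I maintain in the work space a face-down token of five cards $\co\cz\cz\cz\cz$ whose unique $\co$ marks the currently active vertex of $B$; initially this $\co$ sits at the coordinate of the initial vertex $\bar v$, and the token is assembled from the deck. Using the rearrangement mentioned in the preliminaries, I assume that $\pi^0_j$ is the identity for every layer $j$, so that on layer $j$, labeled by the input variable $x_{\ell(j)}$, I must apply $\pi^1_j$ to the token iff $x_{\ell(j)}=1$. I fix once and for all, for each layer $j$, a decomposition of $\pi^1_j$ into at most four transpositions on $\{1,\dots,5\}$.

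For each transposition in this decomposition, I regenerate a fresh face-down 2-card copy of the control bit $x_{\ell(j)}$ as follows: move the input pair for $x_{\ell(j)}$ into the work space, invoke the copy protocol of Theorem~\ref{thm:CopyBit}, return one of the two output copies to the input positions of $x_{\ell(j)}$ (which preserves the read-only invariant since the returned pair has the same value as the original), and keep the other copy in the work space. I then call the conditional swap protocol of Theorem~\ref{thm:SwapCards} with the two token cards at the coordinates swapped by this transposition playing the roles of $\tilde{\alpha}$ and $\tilde{\beta}$, and the working copy playing the role of the control bit. By that theorem, the two token cards are swapped iff $x_{\ell(j)}=1$, which is exactly the effect required of this transposition. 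After all layers have been processed, the $\co$ of the token sits at the coordinate of $v_\text{acc}$ iff $f(x,y)=1$, so the two token cards at the coordinates of $v_\text{acc}$ and $v_\text{rej}$ together form a face-down 2-card representation of the output.

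Polynomial length, constant work space, security, obliviousness and robustness all come out of this construction almost for free. The number of calls to the copy and swap subprotocols is polynomial (polynomially many layers times $O(1)$ transpositions per layer), each subprotocol runs in constant time, each is oblivious, secure and robust by Theorems~\ref{thm:CopyBit} and~\ref{thm:SwapCards}, and the entire schedule of calls --- which layer, which transposition, which positions in the work space --- is fixed by $B$ and does not depend on $(x,y)$. The read-only invariant is preserved because every temporary removal of input cards is paired with the restoration of an identically-valued 2-card pair to those positions before the next operation.

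The main point requiring care is returning the deck to its canonical state between subprotocol invocations, so that the work space stays of constant size independent of the polynomial length. Both the copy and the swap subprotocols leave behind face-up leftover cards whose multiset of suits is exactly that of the portion of the deck they consumed; these can be regrouped into the canonical face-up deck configurations $\co\cz\co\cz\co\cz$ and $\cz\cz$ by input-independent moves, preserving obliviousness and leaking nothing further. Once this reset is in place, the work-space footprint is dominated by the five token cards plus a constant transient, so the protocol witnesses $f \in s\text{-}\CS$ for a constant $s$, giving $\PBP{5} \subseteq \CS$.
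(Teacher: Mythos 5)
Your proposal is correct and follows essentially the same route as the paper's own proof: simulate the width-5 restricted permutation branching program layer by layer on five face-down ``program cards'' $\co\cz\cz\cz\cz$, decompose each layer's 1-permutation into transpositions, and realize each transposition by first copying the control bit via Theorem~\ref{thm:CopyBit} (returning one copy to preserve read-onlyness) and then conditionally swapping via Theorem~\ref{thm:SwapCards}, reading the output off the cards at the accepting and rejecting coordinates. Your additional remark about restoring the auxiliary deck to a canonical state between subprotocol calls is a detail the paper leaves implicit, but it does not change the argument.
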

\begin{proof}
Let $f: \{0,1\}^n \times \{0,1\}^n \to \{0,1\}$ be a function in $\PBP{5}$. 
Then, the function $f$ can be computed by a branching program $P$ with the following properties:
\begin{enumerate}
 \item Each layer has exactly 5 vertices. The input vertex is the first vertex in the first layer. The computation ends either in the accepting vertex $v_\text{acc}$ or in the rejecting vertex $v_\text{rej}$.
 \item The permutation from each layer $i$ to the layer $i + 1$ corresponding to 0 is the identity.
\end{enumerate}
Alice and Bob represent the first layer as $\co\cz\cz\cz\cz$, each card represents one vertex in the layer.
The card $\co$ represents the active vertex in the layer (the initial vertex in the first layer).
We call these 5 cards the \emph{program cards}.
Alice and Bob put the program cards at the work space and turn them face-down.
Alice and Bob simulate the program $P$ layer by layer. They apply permutations determined by $P$ to the program cards according to the player's input bits.
Suppose we have a representation of the active vertex in the $i$-th layer and we want to calculate the active vertex at the $(i+1)$-th layer.
Without loss of generality the label of the $i$-th layer is Alice's bit $x_\ell$ (otherwise the roles of Alice and Bob are reversed).
Thus, we want to apply some permutation $\rho_i \in S_5$ to the program cards if $x_\ell = 1$ and keep the order of the program cards if $x_\ell = 0$.

We decompose the permutation $\rho_i$ into transposition $\tau_1 \circ \dots \circ \tau_r$.
For $j=1,\dots,r$, Alice will apply the transposition $\tau_j$ to the program cards. She runs the protocol $\Gamma_1$ of Theorem~\ref{thm:CopyBit} to get cards $\gamma,\delta$ representing her bit $x_\ell$ in the work space.
More formally, after the execution of $\Gamma_1$, there are two pairs of cards such that each pair represent the bit $x_j$ in the 2-card format.
She puts one pair back to the input positions, i.e., the protocol $\Pi$ is indeed read-only.
We denote the cards of the second pair as $\gamma$ and $\delta$. Alice will use them for a conditional swap.
She runs the protocol $\Gamma_2$ given by Theorem~\ref{thm:SwapCards} (applied to the cards $\gamma,\delta$ and to the two program cards which should be affected by the transposition $\tau_j$).
That is, Alice swaps the two cards that $\tau_j$ is acting on if and only if $x_\ell = 1$.
After applying this procedure for all transpositions $\tau_1, \dots, \tau_r$, the permutation $\rho_i$ got applied to the program cards if and only if $x_\ell = 1$ (otherwise the order of the program cards does not change).

Alice and Bob repeat this procedure for each layer of the branching program. 
Let $\alpha$ be the card representing the accepting vertex $v_\text{acc}$ and $\beta$ be the card representing the rejecting vertex $v_\text{rej}$ at the end of the simulation.
The cards $\alpha, \beta$ represent the output of the program.
If the input is accepted, then the accepting vertex $v_\text{acc}$ is active at the end of the simulation and thus the card $\alpha$ has suit $\co$ and the card $\beta$ has suit $\cz$.
Thus, the cards $\alpha, \beta$ represent 1.
On the other hand, if the input is rejected, then the rejecting vertex $v_\text{rej}$ is active.
Thus, the cards $\alpha, \beta$ have suits $\cz$ and $\co$, respectively, and they represent 0.


The protocol $\Pi$ is clearly {\CS}-protocol as the players only sequentially apply the copying protocol $\Gamma_1$ and the swapping protocol $\Gamma_2$ to the program cards.
We claim the simulation protocol $\Pi$ is secure.
Both protocols $\Gamma_1$ and $\Gamma_2$ are secure.
The only helping cards which are used during the whole run of the protocol $\Pi$ are program cards which are placed face-up from the deck and then turned face-down for the rest of the protocol.

\end{proof}

Now we prove the opposite inclusion of Theorem~\ref{thm:OneCard}.

\begin{theorem}
\label{thm:OneCard2}
$\CS \subseteq \BP$.
\end{theorem}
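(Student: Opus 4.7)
My plan is to simulate the given $\CS$-protocol $\Pi$ step by step by a branching program $B$ whose state at each layer records the visible configuration of the work space. First I would fix every Shuffle action to output some arbitrary fixed permutation from its support: since $\Pi$ is required to output $f(x,y)$ for \emph{every} outcome of the shuffles, this preserves correctness and turns $\Pi$ into a purely deterministic function of the input. With the randomness removed, I take the state of $B$ at step $t$ to be the tuple that records, for each of the $2s$ work space positions, either $\ce$ (empty), or an occupied position together with its suit in $\{\co,\cz\}$ and a face-up/face-down flag. There are only $5^{2s}=O(1)$ such tuples, so the width of $B$ is constant. The input positions themselves need not be part of the state: by read-only-ness, whenever such a position is occupied it still carries the original input bit, which $B$ can query directly at the moment that bit first enters the work space.

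I would then build $B$ with one layer per action of $\Pi$. At a node of layer $t$ the BP looks up the action $\Pi$ would perform; by obliviousness this action depends only on the current visible state and $t$. Moves within the work space, Turns of work space cards, and the (now fixed) Shuffles update the workspace tuple deterministically and contribute unlabeled edges. The only action that brings in genuinely new information is a Move from an input position $i$ into the work space: at the corresponding layer $B$ queries the input bit stored at $i$ and branches into two successors according to its value, placing the corresponding suit at the destination cell of the work space. A Move in the opposite direction simply erases the suit at the source work space cell, relying on the read-only property to certify that the card restored to position $i$ carries the correct value. The accepting nodes are those final work space configurations whose designated output cells display $\co\cz$. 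The resulting $B$ has width $5^{2s}=O(1)$ and length $O(L)=\mathrm{poly}(n)$, witnessing $f\in\BP$.

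The main subtlety I expect to wrestle with is that $\Pi$'s action at step $t$ is formally a function of the \emph{full} visible state, which also contains the occupancy pattern of the $4n$ input positions; so two executions reaching the same work space configuration at step $t$ could in principle differ on that pattern and hence on the action taken. To close this gap I would argue inductively that the input-side occupancy at any step is itself a deterministic function of the work space-visible history: every action that changes the occupancy of an input position is dictated by earlier work space visibility (using obliviousness together with the security guarantee that rules out protocols genuinely branching on the input-side pattern), so any two paths merging at the same node of $B$ must have executed the same sequence of input-affecting Moves and therefore share the same input occupancy. Once this is established, the transitions of $B$ are unambiguously defined and the construction above gives the desired constant-width polynomial-length branching program for $f$.
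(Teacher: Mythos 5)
Your proposal follows essentially the same route as the paper's proof: a layered branching program with one layer per protocol action, nodes indexed by the $(2c+1)^{2s}$ possible work-space configurations (your $5^{2s}$ for two card types), input queries placed exactly at Moves from input positions, and shuffles collapsed to a single fixed permutation, which preserves correctness because the protocol must output $f(x,y)$ for every shuffle outcome. The subtlety you flag --- that the action formally depends on the full visible state, including the occupancy of the $4n$ input positions --- is real but is passed over silently in the paper (which simply asserts that the internal state ``determines a visible state''), so your explicit inductive argument that the work-space-visible history pins down the input-side occupancy is a welcome elaboration rather than a deviation.
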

\begin{proof}
Consider a family of functions $\{ f_n: \{0,1\}^n \times \{0,1\}^n \to \{0,1\} \}_{n\ge 0}$ for which we have a sequence of secure read-only oblivious protocols, one for each $n$, which are of length polynomial in $n$, with deck size $s$ and work space size $2s$.
Let $c$ be the number of different cards used by the protocol. At any moment, the work space can be in at most $(2c+1)^{2s}$ different states which we call the internal states of the protocol.
For any $n\ge 1$ we will build a width-$(2c+1)^{2s}$ branching program of the same length $T_n$ as the protocol for $f_n$.
Each layer of the branching program consists of vertices where each vertex corresponds to one internal state of the protocol. We need to define edges between the layers of the branching program.

Let $v$ be a vertex at layer $t \in \{0,\dots, T_n-1\}$. It corresponds to some internal state which in turn determines a visible state that together with $t$ determines the action taken by the protocol at such a state. We define the edges based on the type of that action.
If the action is a move of a card from some input position into the work space then node $v$ queries the value of the corresponding 
input variable and the outgoing edges lead to nodes corresponding to internal states that reflect a move of the card into the work space.
If the action is a shuffle operation then node $v$ queries variable $x_1$ and irrespective of its value both outgoing edges go to the node in the next
layer corresponding to the internal state obtained by applying one of the allowed permutations. (The particular choice of the permutation does not matter.) Similarly, for a move of a card within the work space or out from the work space, the edges will go into a node that reflects the internal state after the move. In the last layer, we designate vertices that correspond to accepting states of the protocol as accepting all other nodes will be rejecting.

It should be clear from the construction that the resulting branching program computes $f_n$ and has the required properties.
\end{proof}

Theorem~\ref{thm:OneCard} is a corollary of Theorems~\ref{thm:OneCard1} and~\ref{thm:OneCard2} and Barrington's theorem (Theorem~\ref{thm:Barrington}).

\section{Simulating Turing Machines}

In this section, we will look at computation that obliviously and securely computes on committed inputs in 2-card representation.
The exact split of the input between Alice and Bob is irrelevant in this section so we assume that the total length of the input is $n$ bits.
The protocols are expected to preserve the committed inputs: Although they may be allowed to modify the committed input during the computation, by the end of the computation the committed input must be restored to its original form. The protocols do not leak any information about the committed inputs except for what can be derived from the output cards if they are inspected. The protocols can be carried out by either player. To guarantee robustness and security shuffle operations should be always done by both players. (We use only uniformly random shuffle and random cyclic shift so performing them twice does not change their output distribution.)

\begin{theorem}\label{thm:s-space}
Let $s(n)\ge \log n$ be a non-decreasing function. Let $f$ be a function computable by a Turing machine in space $s(n)$.
Then $f$ is in $O(s(n))$-\CS.
\end{theorem}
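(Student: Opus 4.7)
The plan is to simulate the Turing machine $M$ step by step, storing the entire current configuration of $M$ in the work space in 2-card format. A configuration consists of the state ($O(1)$ bits), the input-head position ($O(\log n)$ bits), the work-tape head position ($O(\log s(n))$ bits) and the work-tape contents ($s(n)$ bits); altogether this is $O(s(n))$ bits and therefore fits on $O(s(n))$ face-down cards. The input cards are never modified, so the protocol will be read-only, and each step will operate only on the configuration cards together with a constant amount of scratch that is reset to a canonical form at the end of the step.

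A single step of $M$ is simulated in three phases. First, I obliviously fetch the bit currently under the input head into a designated ``input read buffer''. This is done by a scan: for every $i\in\{1,\dots,n\}$ I compute a control bit $b_i$ equal to ``input head $=i$'' as a constant-size Boolean function of the head register, which by Theorem~\ref{thm:OneCard1} can be evaluated using only constantly many auxiliary cards; then I copy $x_i$ into a pair of scratch cards via Theorem~\ref{thm:CopyBit} (preserving $x_i$ on its input position) and conditionally swap the scratch pair into the read buffer via Theorem~\ref{thm:SwapCards} controlled by $b_i$. Exactly one index has $b_i=1$, so the buffer ends up holding the correct input bit while no particular index is distinguished in the visible trace. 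The bit under the work-tape head is fetched analogously by scanning the $s(n)$ work-tape cells.

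Second, given the state cards together with the two read bits, the transition function of $M$ is a fixed function of $O(1)$ bits yielding the new state, the bit to be written onto the work tape, and $\pm1$ increments of the two head registers; this function is trivially in $\NC$ and, by Theorem~\ref{thm:OneCard1}, is evaluated on fresh work-space cards using only constantly many auxiliary cards. Third, the freshly computed work-tape bit is written back at the hidden work-head position by a symmetric scan: for each cell $k$ I copy the new bit and conditionally swap the copy with the contents of cell $k$ via Theorem~\ref{thm:SwapCards} under the control of ``work head $=k$'', while the head registers are incremented or decremented by a small $\NC$ sub-protocol. Finally the state cards are overwritten and all scratch and buffer cards are restored to their canonical initial form so that step $t+1$ begins from the same visible layout as step $t$.

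The main obstacle is combinatorial bookkeeping: every conditional read or write must produce a visible trace independent of the hidden bits of the configuration (head positions, state, tape contents), and the scratch space must be restored deterministically after each simulated step so that consecutive steps do not interfere. Obliviousness and security of each individual sub-routine are provided by Theorems~\ref{thm:CopyBit} and~\ref{thm:SwapCards}, and the sequential composition of polynomially many oblivious secure sub-routines is again oblivious and secure. Accounting for costs, each simulated step uses $\mathrm{poly}(n+s(n))$ actions and $O(s(n))$ configuration cards plus $O(1)$ scratch cards, so over the $\mathrm{poly}(n)$ steps of $M$ the total protocol length is polynomial in $n$ and the card budget is $O(s(n))$, placing $f$ in $O(s(n))$-\CS.
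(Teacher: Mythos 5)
Your proposal is correct and follows essentially the same route as the paper: a step-by-step simulation of $M$ that keeps the configuration in $O(s(n))$ face-down 2-card-encoded auxiliary cards and uses an oblivious full scan with conditional selection to fetch and write the bits under the hidden head positions, with the local transition logic delegated to constant-card $\NC$ sub-protocols. The only cosmetic differences are that the paper marks the work-head position in unary inside the tape block and recomputes the whole new tape $w'$ cell by cell (using $\SEL$ rather than conditional swaps to accumulate the scanned bit), whereas you keep a binary head register and write back a single cell; also note that the predicate ``head $=i$'' is a function of $\Theta(\log n)$ bits, not $O(1)$, though it is still in $\NC$ and so handled exactly as you intend.
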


Let $\SEL:\{0,1\}^n \to \{0,1\}$ be the function such that $\SEL(c,b,a)=a$ if $c=0$ and $\SEL(c,b,a)=b$ otherwise.

\begin{proof}
We describe the algorithm for the protocol in high-level form and leave details of the construction to the interested reader.
Let $f$ be computable by a Turing machine $M$. Without loss of generality, we assume $M$ uses a binary alphabet, on inputs of length $n$ 
it uses work space exactly $s(n)$ bits and computes for $t(n)$ steps. The output of $M$ is determined by the first bit of its work tape.
We will simulate the computation of $M$ step by step. 

The protocol will use $8 s(n) + 4\log n + O(1)$ auxiliary cards. Two blocks $w$ and $w'$ will represent $2 s(n)$ bits, each, and two blocks
$p$ and $j$ will represent $\log n$ bits each. In addition to that there is a block $q$ of $O(1)$ bits, and some additional auxiliary bits. 
We need a constant number of positions to be empty.
All the bits are encoded in 2-card representation. The block $w$ represents the content of the work tape of $M$, 2 bits per tape cell, where the second bit
indicates the presence of the work tape head on that particular tape cell. The block $p$ encodes in binary the current position of the input head of $M$.
The block $q$ encodes the internal state of $M$. 

The protocol simulates one step of $M$ as follows: first, it determines the value $b$ of the bit scanned by the input head, then it calculates into $w'$ the content of the work tape of $M$ after this step. Then it updates the internal state, the input head position $p$, and switches $w'$ and $w$.

To determine $b$ the protocol looks at each input bit $x_i$ one by one and records the one that has an index corresponding to $p$. Set $b$ to 0.
For $i=1,\dots,n$, the protocol copies $x_i$ into some work space $b'$, it sets $j$ to represent $i$, obliviously compares $p$ 
and $j$ while recording the result into $c$. (Comparing bit strings can be done by an $\NC$ circuit so there is an oblivious protocol for that of poly-logarithmic length.) 
From $c$, $b$ and $b'$ we can calculate the new value of $b$ by evaluating $\SEL(c,b,b')$. This can be done obliviously.
After processing all the input bits, $b$ has the value of the currently scanned input bit.

Now, we can determine $w'$, the content of the work tape of $M$ after this step of the computation. We compute $w'$ cell by cell. The value of each cell is a function of the input bit $b$, $M$'s state $q$, and the previous content of the cell in $w$ together with the content of adjacent cells. 
Hence, the value of each bit of $w'$ is a function of constantly many bits and can be computed obliviously. 

After computing $w'$, we can also calculate $d$, the direction in which the input head of $M$ should move, and the new state $q'$ of $M$.
This can be done by scanning $w$ for the work tape position, and recording the relevant information for $q'$ and $d$ when we pass over the current work cell similarly to determining the value of the input bit $b$.

From $p$ and $d$, we obliviously calculate the next position of the input head into $j$. (Each bit of $j$ can be computed by an $\NC$ circuit from $p$ and $d$.) 
Finally, we switch the contents of $w$ and $w'$, $p$ and $j$, and $q$ and $q'$.

We repeat this procedure $t(n)$ times. In the end, the first bit of $w$ indicates the output of $M$.

As each step of the computation can be implemented securely, obliviously, and robustly, we obtain a secure, oblivious, and robust protocol for $f$
that uses $O(s(n))$ work space and $O(s(n))$ auxiliary cards.
\end{proof}

As the card-based protocols allow for non-uniformity by protocols using $O(\log n)$ work space we can simulate not only log-space Turing machines
but also polynomial-size branching programs (the \emph{non-uniform log-space}). The above proof can be extended to Turing machines taking advice:
the protocol can provide the advice bit by bit during the phase when the input is scanned bit by bit to determine $b$.
(We assume that the advice is provided to the Turing machine on bit positions with index $>n$. For those positions instead of copying the non-existent input bits, the protocol hardwires the appropriate bit into $b'$. As the advice is the same for each input, this can be done publicly.)

By essentially the same proof as Theorem~\ref{thm:OneCard2} we can obtain a simulation of oblivious, read-only secure protocols that use a logarithmic amount of work space by branching programs of polynomial size. Let $O(\log n)$-$\CS = \bigcup_k (k+k\log n)$-$\CS$. We get:

\begin{theorem}
The class of functions computable by polynomial-size branching programs equals to $O(\log n)$-\CS.
\end{theorem}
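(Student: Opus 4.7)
The plan is to prove both inclusions separately, adapting the two halves of Theorem~\ref{thm:OneCard} to the setting of logarithmic work space. The forward inclusion is handled by a routine reparametrization of Theorem~\ref{thm:OneCard2}, while the reverse inclusion reduces, via a classical structural theorem, to the advice-aware Turing-machine simulation from Theorem~\ref{thm:s-space}.

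For $O(\log n)$-$\CS \subseteq$ polynomial-size branching programs, I would repeat the construction of Theorem~\ref{thm:OneCard2} with one bookkeeping observation. If a read-only oblivious protocol $\Pi$ for $f_n$ has deck size $s = k + k\log n$ and work space size $2s$, then each of the $2s$ work-space positions holds one of at most $2c+1$ values (empty, or one of the $c$ distinct cards face-up or face-down), so the number of internal states of $\Pi$ is bounded by $(2c+1)^{2s} = n^{O(1)}$. Build one layer of the branching program per time step of $\Pi$, with one vertex per internal state; define the edges exactly as in the proof of Theorem~\ref{thm:OneCard2}, where card moves out of an input position query the corresponding input variable and shuffles, intra-work-space moves, and turn actions send both outgoing edges to the induced internal state (using, say, $x_1$ as a dummy query). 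Since the number of layers equals the polynomial length of $\Pi$ and each layer has polynomial width, the resulting branching program has polynomial size and computes $f_n$.

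For the reverse inclusion I would use the well-known characterization that polynomial-size branching programs compute exactly the class $\LOG/\text{poly}$ of non-uniform deterministic logarithmic space. Any function $f$ computed by polynomial-size branching programs is therefore decidable by a Turing machine $M$ running in space $O(\log n)$ with access to a polynomial-length advice string $\alpha_n$. Applying Theorem~\ref{thm:s-space} with $s(n) = O(\log n)$, together with the advice extension sketched in the paragraph following its proof (during the input-scanning phase, at index positions $i>n$ the protocol publicly hardwires the appropriate bit of $\alpha_n$ into $b'$ instead of copying a non-existent input bit), yields a secure, oblivious, read-only protocol of polynomial length using $O(\log n)$ auxiliary cards and $O(\log n)$ work space, which is exactly an $O(\log n)$-$\CS$ protocol.

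The principal conceptual step is invoking the classical equivalence between polynomial-size branching programs and $\LOG/\text{poly}$; once that is in hand, both directions are essentially mechanical. The forward direction is a straightforward observation that $(2c+1)^{O(\log n)}$ is polynomial, and the reverse direction is a direct application of Theorem~\ref{thm:s-space} with the advice enhancement, so I do not expect any substantial new obstacle beyond verifying that read-only behaviour on the committed input is preserved throughout, which it is in the construction of Theorem~\ref{thm:s-space} since only auxiliary cards $w,w',p,j,q$ are ever modified.
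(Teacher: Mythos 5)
Your proposal is correct and follows essentially the same route as the paper: the inclusion $O(\log n)$-$\CS \subseteq$ polynomial-size branching programs is the construction of Theorem~\ref{thm:OneCard2} with the observation that $(2c+1)^{2s}$ is polynomial for $s = k + k\log n$, and the reverse inclusion uses the classical identification of polynomial-size branching programs with non-uniform log-space together with Theorem~\ref{thm:s-space} and its advice extension, exactly as the paper sketches. No substantive differences.
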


\subsection{Read-write Protocols}

So far we have looked only at read-only protocols. If we remove the condition to be read-only we get a potentially larger class of functions computable
by such protocols. When the protocol is not read-only, we still require the protocol to restore its input into the original state by the end of the computation. We also require the protocol to be secure so not to leak any information about the input except for what is implied by the protocol output cards.

We give examples of functions that can be computed by protocols modifying their input which we conjecture are outside of the read-only protocol class with similar bound on the work space. Proving this conjecture would amount to separating $\NC$ from log-space, a major open problem in complexity theory.

Let $s(n)$ be a non-decreasing function such that $\log n \le s(n) \le n/2\log^* n$.
Let $g:\{0,1\}^n \to \{0,1\}$ be in $\NC$, and $h:\{0,1\}^{n-s(n) \log^* n}  \to \{0,1\}$  be a function computable by a Turing machine in space $O(s(n))$ and polynomial time. Define $f:\{0,1\}^n \to \{0,1\}$ as follows:
$$
f(x) =
\left\{
	\begin{array}{ll}
		g(x)  & \mbox{if } x_{n+1-s(n)\log^* n}\cdots x_n \ne 0\cdots 0,   \\
		h(x_1\cdots x_{n-s(n)\log^* n}) & \mbox{otherwise.} 
	\end{array}
\right.
$$

\begin{theorem}\label{thm:RWSim}
The function $f$ defined above is computable by secure robust oblivious protocols of polynomial length that use a constant amount of work space.
\end{theorem}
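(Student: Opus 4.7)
The plan is to express $f(x)=\SEL(z, g(x), \tilde{h})$ where $z$ is a (committed) indicator bit for ``the suffix $x_{n-s(n)\log^* n+1}\cdots x_n$ is nonzero'' and $\tilde{h}$ is a bit that equals $h(x_1\cdots x_{n-s(n)\log^* n})$ whenever the suffix is the all-zero string (and is allowed to be arbitrary otherwise). The bits $g(x)$ and $z$ are in $\NC$ (for $z$, OR of up to $n$ bits is in $\NC$), and the closing $\SEL$ gate is in $\NC$ as well, so by Theorem~\ref{thm:OneCard1} each can be produced in a face-down pair using only constant true work space.

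The nontrivial ingredient is $\tilde{h}$. I would first invoke the Lange--McKenzie--Tapa theorem to replace the space-$O(s(n))$ machine computing $h$ by a reversible Turing machine $M'$ of the same asymptotic space. Then I would execute the simulation of Theorem~\ref{thm:s-space} for $M'$ inside the suffix input positions: the $2s(n)\log^* n$ cards available there host the simulated work tape block $w$, scratch block $w'$, input-head register $p$, state register $q$, and the remaining temporary registers, because for $n$ large enough $2s(n)\log^* n \ge 8s(n)+4\log n+O(1)$. Every face-up operation of the simulation is confined to the constant true work space and uses only the secure copy, swap, and $\NC$-primitive subprotocols, so the suffix cards stay face-down throughout. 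When the simulation halts, $M'$'s output bit sits in a designated suffix cell, which I copy into a face-down pair in the true work space via Theorem~\ref{thm:CopyBit}.

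To satisfy the read-write restoration condition, the protocol then executes the time-reverse of the whole $M'$-simulation: for each elementary card operation carried out on the suffix during the forward pass, perform its inverse in reverse order, recomputing ancilla bits on demand. Because $M'$ is reversible and each card-level step is individually invertible, the reverse pass returns the suffix cards to exactly their initial configuration, regardless of whether it truly was the all-zero encoding. The copied bit $\tilde{h}$ survives in the true work space untouched. Correctness follows: if the suffix was all zero, $M'$ ran on a genuine all-zero work tape and $\tilde{h}=h(x_1\cdots x_{n-s(n)\log^*n})$; otherwise $\tilde{h}$ may be garbage, but then $z=1$ and the $\SEL$ gate outputs $g(x)$.

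The main obstacle is making every suffix-modifying step invertible. Plain shuffles cannot be undone by a naive rerun, since a fresh random shuffle is independent of the original, so the construction must confine every shuffle to the constant true work space — where ancillary cards are used and discarded anyway, and no restoration is needed — and touch the suffix only via deterministic Move and conditional-swap operations driven by ancilla bits that are themselves recomputable. Security and robustness then follow by composition: each subprotocol from Theorems~\ref{thm:OneCard1},~\ref{thm:CopyBit}, and~\ref{thm:SwapCards} is individually secure and robust, all face-up reveals occur inside those subprotocols, and the suffix cards stay face-down during both the forward and reverse passes, so the visible-state distribution depends only on $f(x)$.
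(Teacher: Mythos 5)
Your top-level decomposition $f(x)=\SEL(c,\,g(x),\,\tilde h)$ with $c$ and $g(x)$ produced by $\NC$ subprotocols matches the paper, but your treatment of $\tilde h$ by reversible simulation plus an uncomputation pass has two genuine gaps. First, the reversibility step does not fit the resource bounds. The Lange--McKenzie--Tapp simulation of a space-$O(s(n))$ machine runs in time $2^{\Theta(s(n))}$, which is super-polynomial for most of the permitted range $\log n\le s(n)\le n/2\log^* n$, so the resulting protocol would not have polynomial length; Bennett-style reversible simulations keep the time polynomial but use space $\Omega(s(n)\log t(n))=\Omega(s(n)\log n)$, which does not fit into the $s(n)\log^* n$ bits of storage that the suffix provides. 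No simulation achieving space $O(s)$ and polynomial time simultaneously is known, so the very first step of your plan is unavailable except when $s(n)=O(\log n)$. Second, even granting a reversible $M'$, your reverse pass requires every card operation touching the suffix to be a bijection on card configurations; only then does executing the inverses in reverse order restore an arbitrary (possibly garbage) initial suffix. The simulation of Theorem~\ref{thm:s-space} that you plug in is built from destructive writes ($w'$ is recomputed cell by cell, overwriting its previous content; writing a freshly computed value into a position discards the old cards), and reversibility of $M'$ only makes the transition injective on valid configurations, not on the garbage configurations that arise when the suffix is not all zeros. Turning each simulation step into an in-place invertible sequence of card operations is a substantial construction that the proposal does not supply, and your claim that the reverse pass works ``regardless of whether it truly was the all-zero encoding'' rests precisely on it.

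The paper avoids all of this by exploiting how $f$ is defined: $h$ matters only when the suffix is the all-zero string. It computes the indicator $c$ first and then makes every write to a suffix position conditional, replacing the destination value $d$ by $\SEL(c,a,d)$, so that when $c=1$ the suffix is never altered at all; when $c=0$ the final restoration is trivial, namely conditionally writing zeros back, which is exactly the original content. This removes any need for reversibility or a second pass. To repair your argument you would need to restrict $s(n)$ to $O(\log n)$ and give the in-place reversible implementation of each step, or switch to the conditional-write idea; as written, the proof does not go through.
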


\begin{proof}
The protocol for $f$ proceeds as follows. It first computes the OR of the input bits $x_{n+1-s(n)\log^* n}\cdots x_n$
using a protocol for $\NC$ functions where the output $c$ of the protocol is encoded in 2-card representation in its work space.
Then it computes the value $g$ of $g(x)$ encoded in 2-card representation in the work space. Finally, it uses the cards representing
input bits  $x_{n+1-s(n)\log^* n}\cdots x_n$ to simulate the computation of a Turing machine for $h$ as in the proof of Theorem~\ref{thm:s-space}.
The simulation is done so that if $c=1$ then nothing is done to the input (the simulation is vacuous) and if $c=0$ the simulation is really happening. The simulation uses the input bits  $x_{n+1-s(n)\log^* n}\cdots x_n$ to store $w,w',p$ and $j$ (from the simulation), 
everything else is done in the actual work space of constant size.
Whenever the simulation wants to write some value $a$ into an input position used for the simulation,
it copies the value into the work space, it copies there the current value $d$ of the destination position, computes $\SEL(c,a,d)$ and replaces cards in the destination by the output of $\SEL(c,a,d)$. (Hence, if $c=0$ nothing has happened.) Reading a value can be done by copying the particular bit into the work space and then working with the copied cards. This way the input is undisturbed if $c=1$ and it is overwritten if $c=0$.
At the end of simulation, the protocol copies the output bit $h$, which is the first bit of $w$ into the actual work space, and writes value 0 to all input bits $x_{n+1-s(n)\log^* n}\cdots x_n$ conditionally on $c=0$.
(Bit values read from the storage, taking part in the vacuous computation, will be the same throughout the computation. So the simulations of the $\NC$ circuits implementing various steps of the computation will be secure.)

Finally, the protocol computes $\SEL(c,h,g)$ which is the output of the protocol. All parts of the protocol can be done securely and obliviously.
(This is true also when $c=0$ and the simulation of the Turing machine is bogus.) The protocol restores its committed input by the end of the computation.
\end{proof}

One can also use the technique of catalytic computation to construct protocols for functions not know to be in $\NC$. Buhrman et al. \cite{BCKLS,Cleve-phd} show how to use memory which contains
some information for computation while restoring the memory to its original content by the end of the computation. For example, they can solve the connectivity on directed graphs this way, the problem $\CONN(G)$: Given an $n\times n$ adjacency matrix of a directed graph $G$ decide whether there is a path from vertex $1$ to vertex $n$. They present a polynomial-size program for $\CONN(G)$, which uses $3n^2+1$ work registers and $n^2$ input registers, each holding one bit of information. The program consists of instruction of the form
\[
 r_i \leftarrow r_i \oplus u \cdot v,
\]
where $u$ and $v$ are arbitrary registers different from $r_i$ or constants 0 and 1. The program is oblivious, so it is a straight line program consisting of such instructions.
The program has the property that all registers are guaranteed to have the initial value by the end of the computation except for one specified work register which contains the output value. It is straightforward to implement each such an instruction by secure and robust protocol since the instructions are computable in $\NC$.  

This allows to design an oblivious, secure protocol of polynomial length with constant work space for a function $f' : \{0,1\}^{4n^2} \to \{0,1\}$ that is defined as: $f'(G_1,G_2,G_3,G_4)=1$ if and only if from the vertex $1$ we can reach the vertex $n$ in each of the graphs represented by adjacency matrices $G_1,G_2,G_3$ and $G_4$. Such a function is unlikely to be contained in $\NC$, as $\CONN(G)$ is known to be complete for \emph{nondeterministic} log-space computation. Hence, it is unlikely that protocols that are allowed to modify their input could be simulated by read-only protocols using similar resources.

\section{More Efficient Input Encodings}

\subsection{1-Card Encoding}

In this section, we consider other ways how Alice and Bob can commit their input which use fewer cards.
The first natural encoding is to represent each bit $1$ by face-down card $\co$ and bit $0$ by $\cz$. These cards would be stored 
in front of the players in input positions $1,\dots,2n$.
Whenever the players want to operate with the committed bit they need to extend it to 2-card representation.

There are two ways we know how to do it. Niemi and Renvall \cite{niemi97} gave a protocol that is able to extend the bit without knowing its value.
However, there is a small probability of leaking the value of the bit being extended. The probability is inversely proportional to the number of cards used for the protocol. Hence, one would need a large number of helping cards in order to make sure that the probability of leaking information is negligible. That would erase any savings from the 1-card representation. 

The other way which we use here is to allow the player who owns the particular input bit to extend it using a designated deck of two face-down cards $\cz$ and $\co$. Once the bit is extended it can be copied by the protocol from Theorem~\ref{thm:CopyBit}, the first card of the first copy can be put back in the input position, the second card can be put back into the auxiliary deck, and the second copy can be used for further computation. The auxiliary deck containing the same cards as earlier should be shuffled by both players at the end of this procedure. 
The protocol is robust since a player cheating by extending the input bit by a wrong card will be caught in Step 4 of the copying protocol.

For this procedure, we need to augment our set of actions by the action of extending a bit by a complementary card from a designated deck. This action can be performed by shuffling the auxiliary deck, then peeking at the value of the card we are extending, and selecting the complementary card by peeking at each card in the deck. 

With this operation in mind, we need to extend the definition of protocol security. We say a protocol is \emph{secure from Alice} if for any pair of inputs $(x,y)$ and $(x,y')$ to Alice and Bob, the distribution of the sequence of visible states of the protocol 
together with the sequence of cards seen by Alice while peeking at them during the extension action on inputs $(x,y)$ and $(x,y')$ is the same.
Similarly, the protocol is \emph{secure from Bob} if for any pair of inputs $(x,y)$ and $(x',y)$ to Alice and Bob, the distribution of visible states and cards peeked at by Bob will be the same on both inputs $(x,y)$ and $(x',y)$. The protocol is secure if it is secure from both Alice and Bob.

Using the extension action we can perform all read-only protocols that used the 2-card bit commitment of inputs even for inputs committed in 1-card representation. They will be secure as long as the player performing each extension is the owner of the input bit as seeing his/her input bits does not affect the security definition. Hence, the power of the model stays essentially the same with this modification.

Security becomes more of an issue for protocols that are allowed to modify their inputs. Yet, we can prove a result similar to
Theorem~\ref{thm:RWSim} for slightly modified function $f'$. Let $g:\{0,1\}^n \to \{0,1\}$ be in $\NC$, and $h:\{0,1\}^{n-s(n) \log^* n}  \to \{0,1\}$  be a function computable by a Turing machine in space $O(s(n))$ and polynomial time, where $\log n \le s(n) \le n/2\log^* n$.. Define $f':\{0,1\}^n \to \{0,1\}$ as follows:
$$
f'(x) =
\left\{
	\begin{array}{ll}
		g(x)  & \mbox{if } x_{n+1-s(n)\log^* n}\cdots x_n \ne 0101\cdots 01,   \\
		h(x_1\cdots x_{n-s(n)\log^* n}) & \mbox{otherwise.} 
	\end{array}
\right.
$$

We assume $s(n)\log^* n$ is even, and the first half of $x_{n+1-s(n)\log^* n}\cdots x_n$ is held by Alice and the other half by Bob. 
The other bits can be split between the players arbitrarily.

\begin{restatable}{theorem}{rwsim}
\label{thm:RWSim1}
The function $f'$ defined above is computable by secure robust oblivious protocols of polynomial length that use a constant amount of work space and 1-card encoding of input bits.
\end{restatable}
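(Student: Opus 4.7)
The plan is to adapt the proof of Theorem~\ref{thm:RWSim} to the 1-card input encoding, exploiting the specific shape of the new trigger condition. The crucial observation is that if we group the $s(n)\log^* n$ trigger-block positions into consecutive pairs, then in the trigger case (the block equals $0101\cdots 01$) every pair contains the cards $\cz\co$ in this order --- exactly the 2-card representation of the bit $0$. Thus the trigger positions form $s(n)\log^* n/2$ valid 2-card slots, each initially holding $0$, without any use of the extension action; because $\log^* n\to\infty$, for large $n$ this is ample room for the $O(s(n))$ 2-card words ($w, w', p, j$, and a few additional registers) that the Turing-machine simulator of Theorem~\ref{thm:s-space} needs.

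With this storage in hand, the protocol follows the outline of Theorem~\ref{thm:RWSim}. First, it computes a bit $c$, committed in 2-card form in the constant-size actual work space, that indicates whether the trigger block differs from $0101\cdots 01$; this is an $\NC$ predicate, evaluated by a read-only 1-card subprotocol in which the data bits are fetched via the owner-performed extension--de-extension actions. Second, it computes $g(x)$ into a 2-card word $\hat g$ by the same read-only 1-card machinery. Third, it runs the Turing-machine simulator of Theorem~\ref{thm:s-space} with $w,w',p,j$ placed in the paired trigger slots; reads of the data bits use extensions, and every store that the simulator would make into a slot of current content $d$ with intended new value $a$ is replaced by writing $\SEL(c,d,a)$, so that the store actually happens when $c=0$ (trigger case) and is vacuous when $c=1$. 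Finally, it copies the output of $h$ from the designated slot into a 2-card word $\hat h$, resets each trigger slot by writing $\SEL(c,d,0_{\text{2-card}})$ (which restores $\cz\co$ in the trigger case and leaves the physical cards untouched in the non-trigger case), and reveals $\SEL(c,\hat g,\hat h)$ as the output. The committed input is thereby restored: trigger positions end as $\cz\co\cz\co\cdots$ (trigger case) or unchanged (non-trigger case), and data positions have only been read reversibly.

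The main obstacle is security, which splits into two subtleties. First, in the non-trigger case a pair of trigger cards may contain ``invalid'' 2-card content such as $\cz\cz$ or $\co\co$, and the 2-card subroutines used during the simulation --- the copy from Theorem~\ref{thm:CopyBit}, the conditional swap from Theorem~\ref{thm:SwapCards}, and the $\SEL$ gadget built from them --- must be executable on such pairs and must yield \emph{exactly} the same distribution of visible states as in the trigger case. This holds because those subroutines operate on face-down cards throughout and rely only on uniform cyclic shuffles, so the sequence of visible states they induce depends on the shuffle randomness alone. Second, the extension actions on 1-card data bits are performed by their owners, who thereby peek at their own cards; this leaks nothing to the other party, and shuffling the two-card complementary deck at the end of every extension--de-extension round keeps the visible states independent of the data-bit values. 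Combining these with the robustness of each constituent subroutine yields a secure, robust, oblivious polynomial-length protocol with constant actual work space, as required.
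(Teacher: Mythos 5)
Your high-level plan coincides with the paper's: pair up the trigger block so that in the trigger case it reads as $\frac{1}{2}s(n)\log^* n$ two-card zeros, compute $c$ and $g(x)$ by read-only 1-card machinery, run the Turing-machine simulation in those slots with $\SEL$-guarded writes, and output $\SEL(c,h,g)$. The gap is in the step you dispose of in one sentence: the claim that the 2-card subroutines ``operate on face-down cards throughout'' and therefore induce the same visible-state distribution on invalid pairs such as $\cz\cz$ or $\co\co$ as on valid ones. This is false. The copying protocol of Theorem~\ref{thm:CopyBit} turns four cards face-up in its final step and expects to see two $\co$ and two $\cz$; fed an invalid pair it reveals three cards of one suit, a visible state that occurs with probability $0$ in the trigger case, so running it on a raw trigger-slot pair immediately leaks whether $c=1$ (i.e., whether the trigger block equals $0101\cdots 01$, which is not implied by the output). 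The conditional swap of Theorem~\ref{thm:SwapCards} likewise turns the control pair face-up. So the raw pairs from the trigger positions cannot be fed into the 2-card machinery at all.

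The missing device, which is the technical core of the paper's proof, is a read/write gadget that first extends \emph{each single card} of a trigger pair by a complementary card drawn from the auxiliary deck, producing two individually valid 2-card words $\alpha\overline{\alpha}$ and $\beta\overline{\beta}$ on which the standard subroutines are then safe; a write is realized by computing $\SEL$ on each of the two resulting bits separately and returning one card of each output pair to the slot. This extension forces the acting player to peek at cards that, when $c=0$, hold intermediate values of the simulation depending on \emph{both} players' inputs --- a leakage channel your proposal never considers (your ``second subtlety'' covers only the owner's own committed bits in the read-only region). The paper neutralizes it by having the other player cyclically shuffle before the peek and by a case analysis: equal suits imply $c=1$, in which case the cards still hold the peeking player's own committed input; unequal suits carry no information once the shuffle hides their order. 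Without this gadget and this argument the protocol you describe is either non-functional or insecure.
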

\begin{proof}
The protocol for $f'$ proceeds similarly to the one in Theorem~\ref{thm:RWSim}. 
It first verifies whether the input bits $x_{n+1-s(n)\log^* n}\cdots x_n$ differ from $0101\cdots 01$ (assuming their number is even)
using protocol for $\NC$ functions. The output $c$ of the verification is encoded in 2-card representation in the work space.
Then the protocol computes the value $g$ of $g(x)$ encoded in 2-card representation in the work space. Up until this point, we use the protocol described above to extend input bits into 2-card representations by the player who owns the input bit.

Now we want to use the cards representing input bits  $x_{n+1-s(n)\log^* n}\cdots x_n$ to simulate computation of a Turing machine $M$ for computing $h=h(x_1,\dots,x_{n- s(n)\log^* n})$ as in the previous proofs.
We will use these input cards for storage when $c=0$ and when $c=1$ we will keep them intact. In the former case, we will eventually reset the input bits/cards to the initial state.
Let $I$ be positions of cards which represent bits $x_1\cdots x_{n - s(n)\log^* n}$ at the beginning of the protocol.
Thus, the cards on the positions $I$ represent (in the 1-card encoding) the input for $M$.
These cards will be in a read-only regime during the whole computation.
Let $J$ be the positions of cards representing $x_{n+1-s(n)\log^* n}\cdots x_n$.
The cards on $J$ will represent in 2-card encoding the content of tapes of $M$ during the computation.
Thus, $|J| = s(n)\log^* n$ but the cards on $J$ will represent $\frac{1}{2}s(n)\log^* n$ bits.
As $x_{n+1-s(n)\log^* n}\cdots x_n = 0101\cdots 01$ (if $c = 0$), the cards on $J$ represent $\frac{1}{2}s(n)\log^* n$ zeros at the beginning of the protocol.

The simulation proceeds in a similar way as the simulation in proof of Theorem~\ref{thm:RWSim}.
For reading bits from $I$ that encode the input bits to $M$ we use the same 1-card extension protocol as above to copy them into work-space.
Now, we need procedures that will read and write bits in 2-card representations from positions $J$.
However, if $c = 1$ some two consecutive positions would not represent a bit correctly (the two cards on them would have the same suit).
The read/write procedures need to work and be secure even in this case.
The players cannot simply inspect the cards on positions $J$ because they may represent some intermediate results of the computation.

First, we describe how to read a bit $b$ encoded in $J$. We want to create a 2-card representation of bit $b$ in work space if $c = 0$ or a valid 2-card representation of some bit if $c = 1$.
The bit $b$ is represented by 2 cards $\alpha, \beta$ on positions in $J$.
Note that $\alpha$ and $\beta$ can be of the same suit if $c = 1$.
Suppose Alice owns the positions in $J$ representing the bit $b$, the case of Bob's positions is symmetric.
First, Alice will add complementary cards to $\alpha$ and $\beta$ as follows.
Alice prepares the sequence: $\cz\cz \alpha \cz\co \beta$, then she turns the second and fifth card face-down to get a sequence 
\[
\cz ? \alpha \cz ? \beta.
\]
Bob shuffles the six cards cyclically at random. 
Now, Alice extends the card preceding each $\cz$ that is face-up (cards $\alpha$ and $\beta$) by a complementary card to the right taken from an auxiliary deck.
Alice sees the suit of the cards $\alpha$ and $\beta$ during this action but she does not know their actual order.
Bob shuffles the cards cyclically again and turns face-up the cards following the two $\cz$ that are face-up. Now, by a cyclic shift, they rearrange the cards so that they look like 
\[
\cz\cz \alpha \overline{\alpha} \cz\co \beta \overline{\beta},
\]
where $\overline{\alpha}$ and $\overline{\beta}$ are cards complementary to $\alpha$ and $\beta$, respectively.
They can copy each of the card pairs $\alpha,\overline{\alpha}$ and $\beta,\overline{\beta}$ to verify that Alice used complementary cards and the pairs $\alpha,\overline{\alpha}$ and $\beta,\overline{\beta}$ indeed represent two bits in 2-card encodings.

By following this protocol, Alice learns whether the two cards $\alpha,\beta$ have the same suit or are distinct. 
If the suits are the same she also learns the suit.
However, in that situation $c=1$ and she already knew all this information. 
In the later case, she knows that the bits at those input positions are distinct, but she knew that already. 
She does not learn their relative order because of the shuffle by Bob. 
Thus, she does not know whether they were altered since the beginning of the protocol or not.

To finish the read procedure, Alice copies the pair $\alpha,\overline{\alpha}$ (by the protocol of Theorem~\ref{thm:CopyBit}) to get two copies represented by cards $\alpha',\overline{\alpha'},\alpha'',\overline{\alpha''}$.
She returns the cards $\alpha''$ and $\beta$ back to the positions in $J$ from which she moved the cards $\alpha$ and $\beta$ at the beginning.
The cards $\alpha'$ and $\overline{\alpha'}$ are used further in the computation.
Other cards ($\overline{\alpha''}, \overline{\beta}$) are moved back to the auxiliary deck.
If the cards $\alpha$ and $\beta$ have different suits then the cards $\alpha',\overline{\alpha'}$ represent the bit $b$, as the card $\alpha$ and $\alpha'$ have the same suit.
If the cards $\alpha$ and $\beta$ have the same suit then the cards $\alpha',\overline{\alpha'}$ are a valid representation of some bit $b'$.
However, in that case $c = 1$ and the value of $b'$ is irrelevant for the computation.
Bit values read from the storage, taking part in the bogus computation, will be consistent throughout the computation. 
Thus, the simulations of the $\NC$ circuits implementing various steps of the computation are secure.

Now, we describe how to store a bit in 2-card encodings on to some positions in $J$.
Again, suppose we want to store a bit $b$ on to positions owned by Alice and occupied by cards $\gamma_1$ and $\gamma_2$.
Let $\alpha$ and $\beta$ be cards representing $b$.
We want a procedure that will do the following.
If $c = 0$ then the cards $\gamma_1$ and $\gamma_2$ are replaced by cards of the same suits as $\alpha$ and $\beta$, respectively.
Otherwise, if $c = 1$ then the new cards need to have the same suits as $\gamma_1$ and $\gamma_2$.
First, Alice will add complementary cards to $\gamma_1$ and $\gamma_2$ to get a sequence $\gamma_1,\overline{\gamma_1},\gamma_2,\overline{\gamma_2}$ (she proceeds in the same was as in the read procedure above).
Let $d_1$ and $d_2$ be bits represented by $\gamma_1,\overline{\gamma_1}$ and $\gamma_2,\overline{\gamma_2}$, respectively.
She creates two copies of $b$, negates the second one, and computes $a_1 = \SEL(c,b,d_1)$ and $a_2 = \SEL(c,1-b,d_2)$.
Let $\delta_1,\overline{\delta_1}$ and $\delta_2,\overline{\delta_2}$ be cards representing $a_1$ and $a_2$ respectively.
If $c = 0$ then the cards $\delta_1,\delta_2$ represent the bit $b$.
If $c = 1$ then the cards $\delta_1$ and $\delta_2$ have the same suits as the cards $\gamma_1$ and $\gamma_2$, respectively.
Thus, Alice moves the cards $\delta_1$ and $\delta_2$ into the positions of the cards $\gamma_1$ and $\gamma_2$ and moves the rest of the cards to the deck.

To avoid leakage of information from the way Alice picks the cards from the auxiliary deck when picking a card of a particular suit, she proceeds as follows.
She knows how many cards of that suit are in the deck. 
Thus, she shuffles the cards at random and then proceeds left to right to pick one of the cards of that suit uniformly at random. 
To achieve that she picks each card of the desired suit with probability $1/(k+1)$, where $k$ is the number of unseen cards of the desired suit still in the deck. 
This process guarantees that Alice will pick a card from a completely random position.

In this way the protocol can use $x_{n+1-s(n)\log^* n}\cdots x_n$ to compute $h=h(x)$. 
After obtaining value $h$ it outputs $\SEL(c,h,g)$.
\end{proof}

Hence, also in the case of the 1-card representation of the input one can take advantage of the input cards to compute functions that seem unattainable 
with read-only protocols.

\subsection{1/2-Card Encoding}

In the 1/2-card encoding we represent value 1 by either $\co \ce$ or $\ce \cz$, and value 0 by either  $\cz \ce$ or $\ce \co$. 
Here $\ce$ represents an empty bit position.
To commit her input Alice picks $n/2$ of her input bits, and for those input bits, she leaves the empty spot $\ce$ in the position of $\co$,
for the remaining bits she leaves the empty spot in place of $\cz$ (in the 2-card encoding of the bit). This way, she uses exactly $n/2$ cards $\cz$ and $\co$ to commit her input.
It is easy to verify that for each bit there is exactly $1/2$ probability that the missing card will be on the left. Hence, the positions of the missing cards do not leak any information about her input. Bob proceeds in the same way to commit his input.

After committing their inputs they can run any read-only protocol similar to the case of 1-card encoding. Whenever an input bit is needed it is copied into a 2-card representation by essentially the same protocol as in the case of 1-card encoding.

This means that we need only $n+O(1)$ cards to compute any $\NC$ function on $n$-bit inputs. 

We do not know how to implement protocols which could modify their inputs. Modifying an input bit would require either picking the empty spot in the representation at random (which could lead to using substantially more cards of each type) or reusing the cards that are there. In the latter case we do not know how to do it without leaking information.

\bibliography{main}

\begin{thebibliography}{10}

\bibitem{abe18}
Yuta Abe, Yu-ichi Hayashi, Takaaki Mizuki, and Hideaki Sone.
\newblock Five-card and protocol in committed format using only practical
  shuffles.
\newblock In {\em Proceedings of the 5th ACM on ASIA Public-Key Cryptography
  Workshop}, APKC '18, page 3–8, New York, NY, USA, 2018. Association for
  Computing Machinery.
\newblock URL: \url{https://doi.org/10.1145/3197507.3197510}, \href
  {http://dx.doi.org/10.1145/3197507.3197510}
  {\path{doi:10.1145/3197507.3197510}}.

\bibitem{barrington89}
David Barrington.
\newblock Bounded-width polynomial-size branching programs recognize exactly
  those languages in nc1.
\newblock {\em Journal of Computer and System Sciences}, 38:150--164, 02 1989.
\newblock \href {http://dx.doi.org/10.1016/0022-0000(89)90037-8}
  {\path{doi:10.1016/0022-0000(89)90037-8}}.

\bibitem{BCKLS}
Harry Buhrman, Richard Cleve, Michal Kouck\'{y}, Bruno Loff, and Florian
  Speelman.
\newblock Computing with a full memory: Catalytic space.
\newblock In {\em Proceedings of the Forty-Sixth Annual ACM Symposium on Theory
  of Computing}, STOC '14, page 857–866, New York, NY, USA, 2014. Association
  for Computing Machinery.
\newblock URL: \url{https://doi.org/10.1145/2591796.2591874}, \href
  {http://dx.doi.org/10.1145/2591796.2591874}
  {\path{doi:10.1145/2591796.2591874}}.

\bibitem{Cleve-phd}
R.~E. Cleve.
\newblock {\em Methodologies for Designing Block Ciphers and Cryptographic
  Protocols}.
\newblock PhD thesis, CAN, 1989.

\bibitem{crepeau94}
Claude Cr{\'e}peau and Joe Kilian.
\newblock Discreet solitary games.
\newblock In Douglas~R. Stinson, editor, {\em Advances in Cryptology ---
  CRYPTO' 93}, pages 319--330, Berlin, Heidelberg, 1994. Springer Berlin
  Heidelberg.

\bibitem{boer90}
Bert den Boer.
\newblock More efficient match-making and satisfiability the five card trick.
\newblock In Jean-Jacques Quisquater and Joos Vandewalle, editors, {\em
  Advances in Cryptology --- EUROCRYPT '89}, pages 208--217, Berlin,
  Heidelberg, 1990. Springer Berlin Heidelberg.

\bibitem{danny17}
Danny Francis, Syarifah~Ruqayyah Aljunid, Takuya Nishida, Yu-ichi Hayashi,
  Takaaki Mizuki, and Hideaki Sone.
\newblock Necessary and sufficient numbers of cards for securely computing
  two-bit output functions.
\newblock In Rapha{\"e}l C.-W. Phan and Moti Yung, editors, {\em Paradigms in
  Cryptology -- Mycrypt 2016. Malicious and Exploratory Cryptology}, pages
  193--211, Cham, 2017. Springer International Publishing.

\bibitem{kastner17}
Julia Kastner, Alexander Koch, Stefan Walzer, Daiki Miyahara, Yu-ichi Hayashi,
  Takaaki Mizuki, and Hideaki Sone.
\newblock The minimum number of cards in practical card-based protocols.
\newblock In Tsuyoshi Takagi and Thomas Peyrin, editors, {\em Advances in
  Cryptology -- ASIACRYPT 2017}, pages 126--155, Cham, 2017. Springer
  International Publishing.

\bibitem{koch18}
Alexander Koch.
\newblock The landscape of optimal card-based protocols.
\newblock Cryptology ePrint Archive, Report 2018/951, 2018.
\newblock url{https://eprint.iacr.org/2018/951}.

\bibitem{koch21}
Alexander Koch and Stefan Walzer.
\newblock Foundations for actively secure card-based cryptography.
\newblock In Martin Farach{-}Colton, Giuseppe Prencipe, and Ryuhei Uehara,
  editors, {\em 10th International Conference on Fun with Algorithms, {FUN}
  2021, May 30 to June 1, 2021, Favignana Island, Sicily, Italy}, volume 157 of
  {\em LIPIcs}, pages 17:1--17:23. Schloss Dagstuhl - Leibniz-Zentrum f{\"{u}}r
  Informatik, 2021.
\newblock URL: \url{https://doi.org/10.4230/LIPIcs.FUN.2021.17}, \href
  {http://dx.doi.org/10.4230/LIPIcs.FUN.2021.17}
  {\path{doi:10.4230/LIPIcs.FUN.2021.17}}.

\bibitem{koch15}
Alexander Koch, Stefan Walzer, and Kevin H{\"a}rtel.
\newblock Card-based cryptographic protocols using a minimal number of cards.
\newblock In Tetsu Iwata and Jung~Hee Cheon, editors, {\em Advances in
  Cryptology -- ASIACRYPT 2015}, pages 783--807, Berlin, Heidelberg, 2015.
  Springer Berlin Heidelberg.

\bibitem{mizuki16}
Takaaki Mizuki.
\newblock Card-based protocols for securely computing the conjunction of
  multiple variables.
\newblock {\em Theor. Comput. Sci.}, 622(C):34–44, April 2016.
\newblock URL: \url{https://doi.org/10.1016/j.tcs.2016.01.039}, \href
  {http://dx.doi.org/10.1016/j.tcs.2016.01.039}
  {\path{doi:10.1016/j.tcs.2016.01.039}}.

\bibitem{mizuki12}
Takaaki Mizuki, Michihito Kumamoto, and Hideaki Sone.
\newblock The five-card trick can be done with four cards.
\newblock In Xiaoyun Wang and Kazue Sako, editors, {\em Advances in Cryptology
  -- ASIACRYPT 2012}, pages 598--606, Berlin, Heidelberg, 2012. Springer Berlin
  Heidelberg.

\bibitem{mizuki14}
Takaaki Mizuki and Hiroki Shizuya.
\newblock A formalization of card-based cryptographic protocols via abstract
  machine.
\newblock {\em Int. J. Inf. Secur.}, 13(1):15–23, February 2014.
\newblock URL: \url{https://doi.org/10.1007/s10207-013-0219-4}, \href
  {http://dx.doi.org/10.1007/s10207-013-0219-4}
  {\path{doi:10.1007/s10207-013-0219-4}}.

\bibitem{mizuki14a}
Takaaki Mizuki and Hiroki Shizuya.
\newblock Practical card-based cryptography.
\newblock In Alfredo Ferro, Fabrizio Luccio, and Peter Widmayer, editors, {\em
  Fun with Algorithms}, pages 313--324, Cham, 2014. Springer International
  Publishing.

\bibitem{mizuki09}
Takaaki Mizuki and Hideaki Sone.
\newblock Six-card secure and and four-card secure xor.
\newblock In Xiaotie Deng, John~E. Hopcroft, and Jinyun Xue, editors, {\em
  Frontiers in Algorithmics}, pages 358--369, Berlin, Heidelberg, 2009.
  Springer Berlin Heidelberg.

\bibitem{niemi97}
Valtteri Niemi and Ari Renvall.
\newblock Secure multiparty computations without computers.
\newblock Technical report, 1997.

\bibitem{nishida15}
Takuya Nishida, Yu-ichi Hayashi, Takaaki Mizuki, and Hideaki Sone.
\newblock Card-based protocols for any boolean function.
\newblock In Rahul Jain, Sanjay Jain, and Frank Stephan, editors, {\em Theory
  and Applications of Models of Computation}, pages 110--121, Cham, 2015.
  Springer International Publishing.

\bibitem{nishimura16}
Akihiro Nishimura, Yu-ichi Hayashi, Takaaki Mizuki, and Hideaki Sone.
\newblock An implementation of non-uniform shuffle for secure multi-party
  computation.
\newblock In {\em Proceedings of the 3rd ACM International Workshop on ASIA
  Public-Key Cryptography}, AsiaPKC '16, page 49–55, New York, NY, USA, 2016.
  Association for Computing Machinery.
\newblock URL: \url{https://doi.org/10.1145/2898420.2898425}, \href
  {http://dx.doi.org/10.1145/2898420.2898425}
  {\path{doi:10.1145/2898420.2898425}}.

\bibitem{stiglic01}
Anton Stiglic.
\newblock Computations with a deck of cards.
\newblock {\em Theor. Comput. Sci.}, 259(1):671–678, May 2001.

\end{thebibliography}
\end{document}